\newcolumntype{C}{>{\centering\arraybackslash}X}
\newcommand{\bysame}{\hskip.3em \leavevmode\rule[.5ex]{3em}{.3pt}\hskip0.5em}
\DeclareMathOperator*{\argmin}{arg\,min}
\DeclareMathOperator*{\E}{E}
\DeclareMathOperator*{\Var}{Var}
\DeclareMathOperator*{\Std}{Std}
\DeclareMathOperator*{\Cov}{Cov}
\newtheorem{theorem}{Theorem}
\newtheorem{proposition}[theorem]{Proposition}
\newtheorem{lemma}[theorem]{Lemma}
\theoremstyle{definition}
\newtheorem{definition}[theorem]{Definition}
\title{{\normalsize The performance and efficiency of}\\ \Huge Threshold Blocking} 
\author{Fredrik S\"avje\thanks{Department of Economics, Uppsala University. \mbox{\texttt{\url{http://fredriksavje.com}}}}}
\begin{document}
\maketitle

\begin{abstract}
	A common method to reduce the uncertainty of causal inferences from experiments is to assign treatments in fixed proportions within groups of similar units: blocking. Previous results indicate that one can expect substantial reductions in variance if these groups are formed so to contain exactly as many units as treatment conditions. This approach can be contrasted to threshold blocking which, instead of specifying a fixed size, requires that the groups contain a minimum number of units. In this paper, I investigate the advantages of respective method. In particular, I show that threshold blocking is superior to fixed-sized blocking in the sense that it always finds a weakly better grouping for any objective and sample. However, this does not necessarily hold when the objective function of the blocking problem is unknown, and a fixed-sized design can perform better in that case. I specifically examine the factors that govern how the methods perform in the common situation where the objective is to reduce the estimator's variance, but where groups are constructed based on covariates. This reveals that the relative performance of threshold blocking improves when the covariates become more predictive of the outcome.
\end{abstract}

\section{Introduction}

Randomly assigning treatments to units in an experiment guarantees that one is expected to capture treatment effects without error. Randomness is, however, a treacherous companion. It lacks biases but is erratic. Once in a while it produces assignments that by any standard must be considered absurd---giving treatment only to the sickest patients or reading aids only to the best students. While we can be confident that all imbalances are accidental, once they are observed, the validity of one's findings must still be called into question. Any reasonably designed experiment should try to avoid this erratic behavior and doing so inevitably reduces randomness. 

This paper contributes to a longstanding discussion on how such behavior can be prevented. This discussion originates from a debate whether even the slightest imbalances should be accepted to facilitate randomization \citep{Student1938}; if imbalances are problematic, it is only natural to ask why one would not do everything to prevent them.\footnote{While Gosset argued that a balanced experiment was to be preferred over one that was only randomized, his ideal seems to be to combine both. See, e.g., the third footnote of \citet{Student1938}.} The realization that no other method can provide the same guarantee of validity has, however, lead to an overwhelming agreement that randomization is the key to a well-designed experiment and shifted the focus to how one best tames it. As seemingly innocent changes to the basic design can break the guarantee, or severely complicate the analysis, any modification has to be done with care. Going back to at least \citet{Fisher1926}, \emph{blocking} has been the default method to avoid the absurdities that randomization could bring while retaining its desirable properties.

In its most stylized description, blocking is when the scientist divides the experimental sample into groups, or \emph{blocks}, and assigns treatment in fixed proportions within blocks but independently between them. If one is worried that randomization might assign treatment only to the sickest patients, one should form these groups based on health status. By doing so, one ensures that each group will be split evenly between the treatment conditions and thereby avoids that only one type of patients are treated---the treatment groups will, by construction, be balanced with respect to health status.

The currently considered state of the art blocking method is paired matching \citep{Greevy2004}, or \emph{paired blocking}, where one forms the blocks so that they contain equally many units as treatment conditions. Paired blocking is part of a class of methods that re-interprets blocking as an optimization problem. Common to these methods is that one specifies some function to describe the desirability of the blockings and forms the blocks so to reach the best possible blocking according to the measure. The scientist typically seeks covariate balance between the treatment groups, in which case the objective function could be some aggregate of a distance metric within the blocks.

In this paper, I will discuss a development of the paired blocking method introduced in \citet{Higgins2015algorithm}: \emph{threshold blocking}. This method should be contrasted to any \emph{fixed-sized blocking}, of which paired blocking is a special case. The two methods differ in the structure they impose on the blocks, in particular the size constraints. One often want to ensure that at least a certain number---nearly always some multiple of the number treatment conditions---of units are in each block as less can lead to analytical difficulties. Fixed-sized blocking ensures that this is met by requiring that all blocks are of a certain size. Threshold blocking recognizes that larger blocks than the requirement are less problematic than smaller, and that they even can be beneficial. Instead of forcing each block to be of the same size, it only requires a minimum number of units.

My first contribution is to show that relaxing the size restriction will lead to weakly better blockings: for any objective function and sample, the optimal threshold blocking can be no worse than the optimal fixed-sized blocking. This result follows directly from the fact that the search set of threshold blocking is a superset of the fixed-sized search set. While smaller blocks is preferable for most common objectives, and thus seemingly rendering the added flexibility of threshold blocking immaterial, allowing for a few locally suboptimal blocks can prevent very awkward compositions in other parts of the sample.

The interpretation of the blocking problem as a classical optimization problem is not fitting for all experiments. We are, for example, in many instances interested in the variance of our estimators and employ blocking to reduce it. The variance of different blockings can, however, not be calculated or even estimated beforehand. The objective function of true interest is unknown. We must instead use some other function, a \emph{surrogate}, to form the blocks. The performance of threshold blocking depends on whether a surrogate is used, and then on its quality.

With a known objective function we can always weigh the benefits and costs, so that threshold blocking finds the best possible blocking. If we instead use a surrogate, it might not perfectly correspond to the behavior of the unknown objective. A perceived benefit might not reflect the actual result. While we still can expect threshold blocking to be beneficial in many settings, it is possible that the surrogate is misleading in a way that favors fixed-sized blockings. In general, when the surrogate is of high quality (i.e., unlikely to be misleading), the added flexibility of threshold blocking will be beneficial.

The factors that govern the surrogate's quality are specific to each pair of objective and surrogate. The main contribution of this study is to investigate which factors are important in the common case where covariate balance is used as a surrogate for variance reduction. In particular, I show that the variance resulting from any blocking method can be decomposed into several parts of which two are affected by the properties of the method. The first part is that the variance will be lower when there is more balance in the expected potential outcomes, which is a function of the covariates. As covariate balance is observable at the time of blocking, threshold blocking will lead to the greatest improvement with respect to this part.

The second part accounts for variation in the number of treated in the blocks, which increases estimator variance. Fixed-sized blocking will construct blocks that are multiples of the number of treatment condition. This ensures that equally many units will be assigned to each treatment group making this factor immaterial. The flexibility of threshold blocking can, however, introduce such variation and subsequently lead to increased variance. The relative performance between the methods thus depends on whether the covariates are predictive enough, and whether the relevant type of covariate balance is considered, so that improvement in the surrogate offsets the increase expected from variability in the number of treated.

These results could put the current view of fixed-sized blocking as the default blocking method into question. The use of fixed-sized blocking, and paired blocking in particular, is often motivated by that it never can result in a higher unconditional variance than when no blocking is used, and that it leads to the lowest possible unconditional variance of all blocking methods. This study show that we can expect threshold blocking to outperform paired blocking in many situations, and, in particular, in situations where blocking is likely to be beneficial (i.e., when covariates are predictive of the outcome). With a threshold design there is, however, no longer a guarantee that the variance is no larger than with no blocking.

In the next section, I will introduce threshold blocking in more detail and discuss how it relates to other methods to achieve balance in experiments. In Section \ref{thresholdb-adv}, I formally describe the two blocking methods and prove that threshold blocking outperforms fixed-sized blocking when the objective function is known and discuss the consequences of an unknown function. Section \ref{sec-variance} looks specifically at the case when reduction in unconditional variance is of interest. This is followed by a small simulation study in a slightly more realistic setting, and Section \ref{sec-conclude} concludes.

\section{Threshold blocking}

A useful way to understand blocking and other methods that aim to make the treatment groups more alike is to consider how they introduce dependencies in the assignment of treatments. We can then loosely order the methods along a continuum based on the degree of introduced dependence. Specifically, to make treatment groups more similar we want to impose a negative correlation in the treatment assignments among similar units; if a unit is in one treatment group, units that are similar to the first are likely to be in other groups. At one extreme of the continuum, treatment is assigned using a coin flip for each unit and, subsequently, each unit's treatment is independent all the others'. At the other extreme, all treatments are perfectly correlated so that all assignments are determined by a \emph{single} coin flip regardless of the sample size.\footnote{Re-randomization methods \citep{Morgan2012} are hard to place on this continuum. Depending on the re-randomization criterion, they could introduce any level of dependence: if the criterion is non-binding, there would be no dependence, and with a (symmetric) criterion so strict so that only two assignments are accepted, we are at the other extreme. What is common to all re-randomization methods is that, if any dependence is introduced, it is generally so complex that it is analytically inaccessible, and one must rely on permutation-based inferences.}

Two factors change along this continuum. The more appropriate dependence that is introduced, the more accurate the inference will be as we can impose the desirable correlation structure.\footnote{Obviously, any dependence is not useful---to impose correct type requires a lot of information.} The correlation structure we impose must, however, be accounted for in the analysis. While this is generally trivial for point estimation, estimation of standard errors or null distributions can be hard, if not impossible, without very restrictive assumptions. In general, the more dependence we introduce the harder this problem becomes. Our position on the continuum is largely a trade-off between achieving balance between treatment groups (and thereby accuracy), and analytical ease.

There seems to be consensus that neither extreme is a good choice for most experiments. Independent treatment assignment makes for almost trivial analysis, but with only a small added complexity, accuracy can be improved often considerably (e.g., by using complete randomization). At the other extreme, perfectly correlated assignment will, when possible, minimize variance \citep{Kasy2013}, but it makes essentially all reasonable uncertainty measures unattainable. The sample space of the estimator, conditional on the drawn sample, is here two points of which we observe one---not even a permutation based approach would be possible. Instead, both theory and applications have focused on the middle of the continuum where the major blocking methods are positioned.

Blocking methods recognize that a negative correlation is most useful between certain units. Instead of imposing a large complicated correlation structure in the whole sample, they remove the less needed dependencies to keep only the important ones. By assigning treatment in fixed proportions within the blocks, a strong negative correlation is imposed within groups of units that are very similar, but it keeps assignment independent across groups. With independent blocks the analysis is considerably simpler than if all assignments were correlated. However, as expected from the trade-off, we could improve accuracy further---for example by introducing dependencies also between blocks, so that an unbalanced assignment in one block tended to be counteracted by a slight imbalance in the opposite direction in another---but this would also obfuscate the analysis.

What differentiates blocking methods is how they form blocks. The original blocking methods partitioned the sample into perfectly homogeneous groups based on some categorical variable, usually an enumeration of strata of discrete covariates.\footnote{Some authors still use ``blocking'' to exclusively refer to this type of method. In this paper, I will take all methods that assign treatment with high dependence within pre-constructed groups of units, but independently across them, to be blocking methods.} In cases with very low dimensional data, this method works well; one simply form blocks just as the sample is naturally clustered. However, with more information, all observations will typically be unique. As there then exist no homogeneous groups, this approach is no longer possible. Inspired by the multivariate matching problem in observational studies \citep{Cochran1973,Rosenbaum1989}, modern blocking methods construct blocks based on a distance metric or some other function indicating the degree of similarity between units \citep{Greevy2004}. Doing so, the problem is transformed into an optimization problem. When homogeneous groups do not exist in the sample, these methods set out to find the blocking that, while not being perfect, is the best possible. The blocking methods considered in this paper are all in the class.

Much of the recent work has focused on, what I will refer to as, \emph{fixed-sized blocking}, which is part of this class of methods. With this method, blocks are constructed so to minimize the objective function subject to that they all are of a certain size. There are several reasons why one would want to impose a size restriction. Primarily, many estimators are based on the within-block differences of the average outcome in the treatment conditions. If the blocks do not contain at least as many units as treatment conditions, these estimators will be undefined. Too large blocks are, however, not desirable either. Returning to our continuum, if we want to maximize the expected similarity of the treatment groups, we want to keep the blocks as small as possible as this maximizes the negative correlation between units assigned to the same block. These two objectives together---keeping block sizes as small as possible while ensuring a certain number units in the blocks---suggests a fixed-sized blocking.

Threshold blocking, as introduced in \citet{Higgins2015algorithm}, is another subclass of this class of methods. It differs from fixed-sized blocking only in that it imposes a minimum block size rather than a strict size requirement. In many ways, the difference is parallel to the difference between full matching and one-to-one (or one-to-$k$) matching in observational studies \citep{Rosenbaum1991,Hanson2004}: threshold blocking allows for a more flexible structure of the blocks and can therefore find a more desirable solution. Or, with the interpretation as an optimization problem, threshold blocking extends the problem's search space.

On our continuum, the two subclasses are positioned approximately at the same place, in the sense that the amount of dependence does not differ much. The difference lies in the type of correlation they introduce. Keeping the blocks exactly at the specified size, fixed-sized blocking ensures that units within the same block are correlated to the greatest possible extent. However, in much the same way that one-to-one matching often is forced to do bad matches and forgo good matches due to the required match structure, the strict size requirement often forces fixed-sized blocking to make two units' assignment independent even if they ideally should be highly correlated. Conversely, it sometimes must impose a high correlation when they ideally should be independent. Threshold blocking allows for slightly less correlation between some units (i.e., bigger blocks) if this avoids such situations. It still recognizes that a minimum size is very beneficial due to the restrictive analytical problems that otherwise would follow and achieves the flexibility by allowing for bigger, but not smaller, blocks.

To illustrate this difference, consider when the specified block size is two and there are three very similar units in the sample. Fixed-sized blocking would here be forced to pick two of the units to have perfectly correlated assignments but which are uncorrelated with the third unit. The third unit, in turn, would be forced to be perfectly correlated with some other, less similar, unit. Threshold blocking has the option to put all three units in the same block. There will be less correlation between the two previously blocked units, but they will no longer be independent with respect to the third unit.

This study is part of a growing literature on how to best ensure balance in experiments. Iterating the introduction, methodologists seem to agree that one should try to balance experiments whenever possible \citep{ImaiKing2008,Rubin2008}, but there is still an active discussion on how one best does so. Apart from a large strand of the literature focusing on the algorithmic aspects of the problem \citep[see, e.g., ][]{Lu2011,Moore2012,Higgins2015algorithm}, some recent contributions have discussed more general properties of different blocking methods as I do in this paper.

Closely related is an investigation by \citet{Imbens2011} that partly focuses on the optimal block size. Specifically, he questions whether paired blocking is the ideal blocking method. To maximize assignment dependence, and thus accuracy, we want to keep the blocks as small as possible just like paired blocking does. Imbens notes that, while this would lead to the lowest variance, estimation of conditional standard errors is quite intricate when the blocks only assign a single unit to each treatment. For this reason, he recommends that blocks contain at least twice the number units as treatment conditions. While also being concerned with the block size, Imbens investigates the optimal block size requirement rather than how it best should be imposed. In the analogy with matching, Imbens' study is closer to which the optimal $k$ is in one-to-$k$ matching, instead of its performance relative to full matching as in this study.

Also related to my inquiry, while not examining the block size, are a few recent papers on the optimal balancing strategy. \citet{Kasy2013} discusses a situation where one is blessed with precise priors of the relation between the covariates and the (unobserved) outcomes. He shows that when such information is available the optimal design, with respect to mean square error of the treatment effect estimator, is to minimize randomization (i.e., not to randomize at all or only do so with a single coin flip). In other words, he advocates for the previously discussed extreme position on our continuum. While we indeed can expect this to minimize the uncertainty of the point estimates, the analytical challenges that inevitably follow will oftentimes be too troublesome. For example, most conditional standard errors are impossible to estimate and unconditional variances require strong assumptions.

Related to \citet{Kasy2013} is a study by \citet{Kallus2013}. He shows that, using a minimax criterion and interpreting experiments as a game against a malevolent nature, all blocking-like methods will produce a higher variance than complete randomization unless we have some information on the relation between the covariates and the outcome. He goes on showing how to derive the optimal design given an information set, and he shows that certain information sets lead to the classical designs. While Kallus' set-up is not directly applicable to threshold blocking (as his condition 2.3 prescribes fixed-sized blocks), there is no reason to expect that the results would not carry over also to the current setting. One can, however, discuss whether his problem formulation is relevant for the typical experiment. The result hinges on the use of minimax criteria. It is not clear why we would only be interested in the performance under the worst imaginable sample draw. Changing the criteria to something less risk-averse, e.g., the average performance, the results no longer hold. Nevertheless, \citet{Kallus2013} clearly illustrates the important role that the outcome model, and our information of it, plays in blocking problems.

Last, \citet{Barrios2014} investigate the optimal (surrogate) objective function to use with paired blocking when interest is in variance reduction. He demonstrates that if we have access to the conditional expectation function (CEF) of the outcome, and under a weak version of the constant treatment effect assumption, it is best to seek balance in the predicted outcomes from the CEF. Estimator variance is related to how much to treatment groups differ with respect to the potential outcomes. In order to lower variance, we thus want to impose a negative correlation between units with similar potential outcomes. We cannot observe these outcomes beforehand, but the best predictor of them (i.e., the CEF) will form an excellent surrogate. While Barrios' study is restricted to paired blocking, as hinted by the investigation in Section \ref{sec-decomp}, his results likely extend also to other fixed-sized blockings and to threshold blocking. Of course, using this surrogate requires us to have access to detailed information about the outcome model.

\section{The advantage of threshold blocking} \label{thresholdb-adv}

Let $\mathbf{U}=\{1,2,\cdots,n\}$ be a set of unit indices representing sample of $n$ units in an experiment.

\begin{definition}
	A \emph{block} is a non-empty set of unit indices. A \emph{blocking} of $\mathbf{U}$ is a set of blocks, $\mathbf{B}=\{\mathbf{b}_1, \mathbf{b}_2, \cdots, \mathbf{b}_m\}$, such that:
	\begin{enumerate}
		\item $\forall\; \mathbf{b} \in \mathbf{B},\mathbf{b} \neq \emptyset$,
		\item $\bigcup_{\mathbf{b}\in\mathbf{B}}\mathbf{b} = \mathbf{U}$,
		\item $\forall\; \mathbf{b}_i, \mathbf{b}_j \in \mathbf{B}, \mathbf{b}_i \neq \mathbf{b}_j \Rightarrow \mathbf{b}_i \cap \mathbf{b}_j = \emptyset$,
	\end{enumerate}
	In other words, a blocking is a collection of blocks so that all units are in exactly one block. 
\end{definition}
\begin{definition} \label{sharpdef}
	A \emph{fixed-sized blocking} of size $S$ of $\mathbf{U}$ is a blocking where all blocks contain exactly $S$ units: $\forall\;\mathbf{b} \in \mathbf{B}, |\mathbf{b}| = S$.
\end{definition}
\begin{definition} \label{defthreshold}
	A \emph{threshold blocking} of size $S$ of $\mathbf{U}$ is a blocking where all blocks contain at least $S$ units: $\forall\;\mathbf{b} \in \mathbf{B}, |\mathbf{b}| \geq S$.
\end{definition}
Let $\mathbf{A}$ denote the set of all possible blockings of $\mathbf{U}$. Let $\mathbf{A}_F$ and $\mathbf{A}_T$ denote the sets of all admissible fixed-sized and threshold blockings of a certain size, $S$:
\begin{eqnarray*}
	\mathbf{A}_F &=& \{\mathbf{B}\in \mathbf{A}: \forall\, \mathbf{b} \in \mathbf{B}, |\mathbf{b}| = S\},
	\\
	\mathbf{A}_T &=& \{\mathbf{B}\in \mathbf{A}: \forall\, \mathbf{b} \in \mathbf{B}, |\mathbf{b}| \geq S\}.
\end{eqnarray*}
Note that for all samples where $|\mathbf{U}|$ is not a multiple of the size requirement, $\mathbf{A}_F$ will be the empty set as no blocking fulfills Definition \ref{sharpdef}. One trivial advantage of threshold blocking is that it can accommodate any sample size. Since no performance comparison can be done in that situation, I will restrict my attention to situations where $\mathbf{A}_F$ is not empty.

Consider some objective function that maps from blockings to the real numbers, $f : \mathbf{A} \rightarrow \mathbb{R}$, where a lower value denotes a more desirable blocking.\footnote{We are currently not concerned about exactly what this function is---it suffices to note that the same objective function can be used for both fixed-sized and threshold blocking.}

\begin{definition} 
	An optimal blocking, $\mathbf{B}^*$, in a set of admissible blockings, $\mathbf{A}'$, fulfills:
	\begin{equation*}
		f(\mathbf{B}^*) = \min_{\mathbf{B}\in\mathbf{A}'}f(\mathbf{B}).
	\end{equation*}
\end{definition}

Whenever the sample is finite, the number of possible blocking ($\mathbf{A}$) is also finite which bounds the number of admissible blockings in any blocking problem. This ensures that a solution exists for all blocking problems as long as at least one valid blocking exists. Optimal blockings need, however, not be unique.

Let $\mathbf{B}_F^*$ and $\mathbf{B}_T^*$ denote optimal fixed-size and threshold blockings:
\begin{eqnarray*}
	\mathbf{B}_F^* &=& \argmin_{\mathbf{B}\in\mathbf{A}_F}f(\mathbf{B}),
	\\
	\mathbf{B}_T^* &=& \argmin_{\mathbf{B}\in\mathbf{A}_T}f(\mathbf{B}).
\end{eqnarray*}

\begin{lemma} \label{subset}
	For all samples and all $S$, the set of admissible fixed-sized blockings is a subset of the set of admissible threshold blockings:
	\begin{equation*}
		\mathbf{A}_F \subseteq \mathbf{A}_T.
	\end{equation*}
\end{lemma}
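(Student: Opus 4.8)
The plan is to prove the set inclusion directly by unpacking the two definitions: I would take an arbitrary element of $\mathbf{A}_F$ and verify that it also satisfies the membership condition for $\mathbf{A}_T$. Concretely, fix any blocking $\mathbf{B} \in \mathbf{A}_F$. By the definition of $\mathbf{A}_F$, the set $\mathbf{B}$ is an element of the ambient set of all blockings $\mathbf{A}$, and every block $\mathbf{b} \in \mathbf{B}$ satisfies $|\mathbf{b}| = S$.

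The key (and essentially only) step is then the elementary observation that equality implies the weak inequality: $|\mathbf{b}| = S$ gives $|\mathbf{b}| \geq S$ for each $\mathbf{b} \in \mathbf{B}$. Since $\mathbf{B}$ is already a valid blocking of $\mathbf{U}$ and every one of its blocks meets the minimum-size requirement, $\mathbf{B}$ satisfies the defining condition of $\mathbf{A}_T$, hence $\mathbf{B} \in \mathbf{A}_T$. Because $\mathbf{B}$ was arbitrary, this establishes $\mathbf{A}_F \subseteq \mathbf{A}_T$.

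I do not anticipate any genuine obstacle here: the statement is a one-line consequence of the fact that the predicate ``$|\mathbf{b}| = S$'' is logically stronger than ``$|\mathbf{b}| \geq S$'', and that $\mathbf{A}_F$ and $\mathbf{A}_T$ are carved out of the same set $\mathbf{A}$ by conditions differing only in this replacement of equality with inequality. The one point worth a brief remark is the degenerate case already flagged in the text --- when $|\mathbf{U}|$ is not a multiple of $S$, so that $\mathbf{A}_F = \emptyset$ --- but this requires no separate argument, since the empty set is a subset of every set and the inclusion holds vacuously. I might also note in passing that the inclusion is in general strict (whenever $|\mathbf{U}| > S$ one can form a threshold blocking containing a block larger than $S$), which is precisely what leaves room for the optimal threshold blocking to strictly improve on the optimal fixed-sized blocking; but that observation is not needed for the lemma.
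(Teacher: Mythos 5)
Your proof is correct and follows essentially the same argument as the paper: take any blocking in $\mathbf{A}_F$, note that $|\mathbf{b}| = S$ implies $|\mathbf{b}| \geq S$ for every block, and conclude membership in $\mathbf{A}_T$ by Definition \ref{defthreshold}. Your additional remarks on the vacuous case $\mathbf{A}_F = \emptyset$ and on strictness of the inclusion are accurate but not needed for the lemma.
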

\begin{proof}
	All blockings in $\mathbf{A}_F$ contain blocks so that $|\mathbf{b}|=S$. These blockings also satisfy $\forall\, \mathbf{b} \in \mathbf{B}, |\mathbf{b}| \geq S$ which, by Definition \ref{defthreshold}, make them elements of $\mathbf{A}_T$.
\end{proof}

\begin{theorem} \label{noworse}
	For all samples, all objective functions and all $S$, the optimal threshold blocking can be no worse than the optimal fixed-sized blocking:
	\begin{equation*}
		f(\mathbf{B}_T^*)\leq f(\mathbf{B}_F^*).
	\end{equation*}
\end{theorem}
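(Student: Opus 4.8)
The plan is to reduce the statement to a monotonicity property of minima over nested feasible sets, using Lemma \ref{subset} as the sole substantive input. First I would recall that we have restricted attention to samples for which $\mathbf{A}_F \neq \emptyset$, so that an optimal fixed-sized blocking $\mathbf{B}_F^*$ exists; since the sample is finite, $\mathbf{A}$ is finite, hence $\mathbf{A}_T \supseteq \mathbf{A}_F$ is a nonempty finite set and $\mathbf{B}_T^*$ exists as well. This handles the well-definedness of both sides of the inequality before any comparison is attempted.

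The core step is then immediate: by Lemma \ref{subset}, $\mathbf{A}_F \subseteq \mathbf{A}_T$, so in particular $\mathbf{B}_F^* \in \mathbf{A}_F \subseteq \mathbf{A}_T$. Because $\mathbf{B}_T^*$ attains $\min_{\mathbf{B}\in\mathbf{A}_T} f(\mathbf{B})$ and $\mathbf{B}_F^*$ is one of the blockings over which that minimum is taken, we get
\begin{equation*}
	f(\mathbf{B}_T^*) = \min_{\mathbf{B}\in\mathbf{A}_T} f(\mathbf{B}) \leq f(\mathbf{B}_F^*).
\end{equation*}
No properties of $f$ beyond being real-valued on $\mathbf{A}$ are used, which is exactly why the claim holds ``for all objective functions.''

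Honestly, there is no hard part here: the only thing to be careful about is the bookkeeping around emptiness of $\mathbf{A}_F$, which the surrounding text has already set aside, and the fact that optimal blockings need not be unique — but that is harmless, since the argument compares the optimal \emph{values} $f(\mathbf{B}_T^*)$ and $f(\mathbf{B}_F^*)$, which are well-defined regardless of how many minimizers there are. I would present the proof in essentially the three lines above.
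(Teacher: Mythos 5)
Your proposal is correct and uses the same essential ingredient as the paper's proof, namely Lemma \ref{subset} together with the fact that a minimum over a superset cannot exceed the value at any element of the subset; the paper merely phrases this as a proof by contradiction while you argue directly, and your extra remarks on existence and non-uniqueness match what the surrounding text already establishes.
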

\begin{proof}
	This follows almost trivially from Lemma \ref{subset}. Assume $f(\mathbf{B}_T^*)> f(\mathbf{B}_F^*)$. This implies that $\mathbf{B}_F^* \not\in \mathbf{A}_T$ as otherwise $f(\mathbf{B}_T^*)$ would not be the minimum in $\mathbf{A}_T$. By Lemma \ref{subset} we have $\mathbf{B}_F^* \in \mathbf{A}_T$ and thus a contradiction. 
\end{proof}

\begin{theorem} \label{thres:thm:thres-better}
	There exist samples and objective functions for which threshold blocking is strictly better than fixed-sized blocking:
	\begin{equation*}
		f(\mathbf{B}_T^*)< f(\mathbf{B}_F^*).
	\end{equation*}
\end{theorem}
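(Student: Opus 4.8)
The plan is to prove this existential statement by exhibiting a single explicit triple --- a sample $\mathbf{U}$, a size requirement $S$ with $\mathbf{A}_F \neq \emptyset$, and an objective function $f$ --- for which the strict inequality holds; since the claim is existential, one such example suffices. The construction simply formalizes the illustrative example from Section~2: a cluster of three mutually similar units that a fixed-sized design of block size two is forced to break apart, whereas a threshold design can keep it intact.

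Concretely, I would take $n = 6$ and $S = 2$ (so $\mathbf{A}_F$ is non-empty, as $6$ is a multiple of $2$), attach to each unit $i$ a scalar covariate $x_i$ with $x_1 = x_2 = x_3 = 0$ and $x_4 = x_5 = x_6 = 1$, and define
\[
	f(\mathbf{B}) \;=\; \sum_{\mathbf{b}\in\mathbf{B}} \; \max_{i,j\in\mathbf{b}} |x_i - x_j| .
\]
This is a legitimate objective on all of $\mathbf{A}$ (less within-block spread is preferred), it is nonnegative, a block contributes $0$ when all its units share the same covariate value, and it contributes $1$ otherwise.

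The argument then has two short steps. First, bound the fixed-sized optimum away from zero: every $\mathbf{B}\in\mathbf{A}_F$ is a partition of $\{1,\dots,6\}$ into three blocks of size two, i.e.\ a perfect matching; since $\{1,2,3\}$ has odd cardinality, at least one of units $1,2,3$ must be matched with a unit of $\{4,5,6\}$, producing a block of mixed covariate values, so $f(\mathbf{B}) \geq 1$ for every admissible fixed-sized blocking and hence $f(\mathbf{B}_F^*) \geq 1$. Second, the blocking $\{\{1,2,3\},\{4,5,6\}\}$ lies in $\mathbf{A}_T$ (both blocks have size $3 \geq S$) and is covariate-homogeneous within each block, so it attains $f = 0$; since $f \geq 0$ everywhere, $f(\mathbf{B}_T^*) = 0$. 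Combining, $f(\mathbf{B}_T^*) = 0 < 1 \leq f(\mathbf{B}_F^*)$, the desired strict inequality.

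There is no real obstacle here beyond choosing the example with care. The only point that needs a moment's thought is the parity observation in the first step --- that a perfect matching of a six-element set cannot respect a three-element cell --- which is exactly what forces a fixed-sized design to split the homogeneous triple and is the phenomenon the theorem is meant to capture. One could equivalently replace the within-block diameter in $f$ by the sum of within-block pairwise distances, or by a count of mixed blocks, without affecting anything; I would present the example with whichever form reads most cleanly.
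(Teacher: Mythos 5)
Your proposal is correct and takes essentially the same route as the paper: the paper's own proof uses exactly this sample (six units with a binary covariate split three--three, $S=2$) and a within-block distance objective, showing that every fixed-sized blocking must contain a mixed pair while the threshold blocking $\{\{1,1,1\},\{0,0,0\}\}$ achieves zero. The only differences are cosmetic --- you use the within-block diameter and a clean parity argument where the paper uses a weighted average pairwise distance and enumerates all blockings in a table, and the paper additionally supplies a second example with the conditional estimator variance as objective to motivate the later discussion of surrogates.
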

\begin{proof}
	I will prove the theorem with two examples. These will also act as an introduction to the subsequent discussion. While trivial objective functions suffice to prove the theorem, these examples are chosen to be similar to actual experiments albeit being greatly simplified.
	
	The first example is when we construct the blocking so to minimize the covariate distances between the units within the blocks. This is a common objective used in many actual experiments. The second example is when the objective function is the variance of the treatment effect estimator conditional on the observed covariates. While the unconditional variance is often considered when comparing blocking methods (as I do in later sections), the conditional version best mirrors the position of the scientist as the blocking is decided after covariates, but before outcomes, are observed. As blocking often is used to reduce uncertainty, the second example is closer to the purpose of blocking.
	
	In both examples, the sample consists of six units in an experiment with two treatment conditions. For both fixed-sized and threshold blocking, the block size requirement is two ($S=2$). There is a single binary covariate, $x_i$, which is observed before the blocks are constructed. In the drawn sample, half of units have $x_i=1$ and the other half have $x_i=0$. The only information on the units is the covariate values. All units that share the same value are therefore interchangeable, and blockings can be denoted simply by how it partitions covariates rather than units. For example, $\mathbf{B}=\{\{1,0\},\{1,0\},\{1,0\}\}$ denotes all blockings where each block contain two units with different covariate values. 
	
	For tractability, I will, when applicable, make three simplifying assumptions. First, I assume that the sample is randomly drawn from an infinite population. Second, that the treatment effect is constant over all units in the population. This implies that $y_i(1)= \delta + y_i(0)$ for some treatment effect, $\delta$, where $y_i(1)$ and $y_i(0)$ denote the two potential outcomes in the Neyman-Rubin Causal Model \citep{Neyman1923,Rubin1974}. Third, that the conditional variance of the potential outcomes is constant:
	\begin{equation*}
		\forall x, \sigma^2 = \Var[y_i(1)|x_i=x]=\Var[y_i(0)|x_i=x].
	\end{equation*}
	While these assumptions are unrealistic in most applications, they should not cloud the overarching intuitions that can be gained from the examples.
	
	\subsection{Example 1: Distance metric} \label{ex-dist-met}
	
	In this example the objective function is an aggregate of within block distances between the units based on the covariate. Euclidean and Mahalanobis distances are commonly used as metrics in blocking problems. With a single covariate, as here, the Mahalanobis distance is proportional to the Euclidean and thus produces the same blockings. For simplicity, I will opt for the Euclidean metric, and the distance between units $i$ and $j$ is given by $\sqrt{(x_i-x_j)^2}$. To aggregate the distances and get the objective function, $f(\mathbf{B})$, I will use the average within-block distance weighted by the block size:
	\begin{eqnarray*}
		f(\mathbf{B}) &=& \sum_{\mathbf{b}\in\mathbf{B}}\frac{n_\mathbf{b}}{n} \bar{d}_\mathbf{b},
		\\
		\bar{d}_\mathbf{b} &=& \sum_{i\in\mathbf{b}}\sum_{j\in\mathbf{b}}\frac{\sqrt{(x_i-x_j)^2}}{n_\mathbf{b}^2},
	\end{eqnarray*}
	where $n_\mathbf{b}\equiv |\mathbf{b}|$ is the number of units in block $\mathbf{b}$, and $\bar{d}_\mathbf{b}$ is the average Euclidean distance within that block.\footnote{This aggregation differs slightly from the one that is most commonly used with fixed-sized blocking: the sum of distances. Using the sum works well when the blocks have constant sizes across all considered blockings, in fact the two coincide in that case. When sizes differ between blockings, the sum can be misleading as the number of distances within a block grows exponentially in the block size. Nonetheless, there are examples where threshold blocking is strictly better than fixed-sized blocking also using the sum of distances as objective.}
	
	\begin{table}[htbp]
		\centering
		\caption{Values of the objective functions for different blockings.} \label{tab-bestblocking}
		\begin{tabularx}{0.8\textwidth}{@{\vrule height11pt  width0pt} lXCC}
			&  & \multicolumn{2}{c}{Objectives}  \\ \cline{3-4}
			\\ [-2.3ex]
			Blocking ($\mathbf{B}$) & Valid for & Distance & Variance \\ \midrule
			\\ [-2.5ex]
			$\{\{1,0\}, \{1,0\}, \{1,0\}\}$ & Both & 0.500 & 1.333 \\
			$\{\{1,1\}, \{1,0\}, \{0,0\}\}$ & Both & 0.167 & 0.889 \\
			$\{\{1,1,1\}, \{0,0,0\}\}$ & Threshold & 0 & 0.750 \\
			$\{\{1,1,0\}, \{1,0,0\}\}$ & Threshold & 0.444 & 1.250 \\
			$\{\{1,1,1,0\}, \{0,0\}\}$ & Threshold & 0.250 & 0.889 \\
			$\{\{1,1,0,0\}, \{1,0\}\}$ & Threshold & 0.500 & 1.185 \\
			$\{\{1,0,0,0\}, \{1,1\}\}$ & Threshold & 0.250 & 0.889 \\
			$\{\{1,1,1,0,0,0\}\}$ & Threshold & 0.500 & 1.067 \\ [0.5ex]
			\bottomrule
			\\ [-1.9ex]
			\multicolumn{4}{p{0.76\textwidth}}{\footnotesize \textit{Note:} The table presents values of the objective functions resulting from different blockings. Each row represents a blocking, where only the first two rows are valid fixed-sized blockings. The third column presents the values when the aggregated distance metric is used as objective, as discussed in Section \ref{ex-dist-met}. The fourth column presents the values when the conditional variance ($\Var(\hat{\delta}|\mathbf{x},\mathbf{B})$) is used, as described in Section \ref{ex-condvar}.} \\
		\end{tabularx}
	\end{table}

	Using this function we can calculate the average distance of each blocking and thereby rank them. There are two possible fixed-sized blockings: 
	\begin{eqnarray*}
		\{\{1,0\}, \{1,0\}, \{1,0\}\} \quad \text{and} \quad \{\{1,1\}, \{1,0\}, \{0,0\}\},
	\end{eqnarray*}
	where the first has a weighted average distance of $(1+1+1)/6=1/2$ while the second, which is optimal, has an average of $(0+1+0)/6=1/6$. There are eight possible threshold blockings, as presented with their aggregated distances in the third column of Table \ref{tab-bestblocking}. The optimal threshold blocking is $\{\{1,1,1\}, \{0,0,0\}\}$ with an average distance of 0. Clearly this is better than the optimal fixed-sized blocking's average of $1/6$.
	
	\subsection{Example 2: Conditional estimator variance} \label{ex-condvar}
	
	Now consider using the conditional variance of the treatment effect estimator as our objective. Unlike the previous example, the choice of randomization method and estimator is no longer immaterial. Suppose treatments are assigned using balanced block randomization, and the effect is estimated using a within-block difference-in-means estimator, both as discussed in \citet{Higgins2015estimators}.
	
	With two treatments, balanced block randomization prescribes that, independently in each block, $\lfloor n_\mathbf{b} / 2 \rfloor$ units are randomly assigned to one of the treatments, picked at random, and $\lceil n_\mathbf{b} / 2 \rceil$ units to the other. If each block contains at least as many units as treatment conditions and there is no attrition, this randomization scheme ensures that the estimator always is defined and unbiased of the true treatment effect. The estimator is defined as:
	\begin{equation}
		\hat{\delta} = \sum_{\mathbf{b}\in\mathbf{B}} \frac{n_\mathbf{b}}{n} \left( \frac{\sum_{i \in \mathbf{b}}T_iy_i}{\sum_{i \in \mathbf{b}}T_i} - \frac{\sum_{i \in \mathbf{b}}(1-T_i)y_i}{\sum_{i \in \mathbf{b}}(1-T_i)}\right), \label{estimator}
	\end{equation}
	where $T_i$ is an indicator of unit $i$'s assigned treatment condition and $y_i$ is its observed response. In other words, the estimator first estimates the effect in each block and then aggregates them to an estimate for the whole sample. 
	
	Now consider using the conditional variance of the estimator as objective: $f(\mathbf{B})=\text{Var}(\hat{\delta}|\mathbf{x},\mathbf{B})$ where $\mathbf{x}$ is the set of all covariates. In Appendix \ref{app-conditional}, I show that, in this setting, the variance is given by:
	\begin{eqnarray*}
		\Var(\hat{\delta}|\mathbf{x},\mathbf{B}) &=& \frac{4}{n} \sum_{\mathbf{b}\in\mathbf{B}} \frac{n_\mathbf{b}}{n} \left(1 + \frac{o_\mathbf{b}}{n_\mathbf{b}^2 - 1}\right) \left(\sigma^2 + s_{x\mathbf{b}}^2\left(\mu_1 - \mu_0\right)^2\right),
		\\[1em]
		s_{x\mathbf{b}}^2 &=& \frac{1}{n_\mathbf{b}-1}\sum_{i \in \mathbf{b}} \left(x_i - \frac{1}{n_\mathbf{b}}\sum_{j \in \mathbf{b}}x_j\right)^2,
	\end{eqnarray*}
	where $o_\mathbf{b}$ is an indicator taking value one if block $\mathbf{b}$ contains an odd number of units, and $\mu_x = \E[y_i(0)|x_i=x]$ is the conditional expectation of the potential outcomes under control treatment. $s_{x\mathbf{b}}^2$ is the (unbiased) sample variance of the covariate in block $\mathbf{b}$ and thereby a measure of within-block covariate homogeneity. The squared difference between the conditional expectations of the potential outcomes, $\left(\mu_1 - \mu_0\right)^2$, acts as a measure of how predictive the covariate is of the outcome.
	
	In this expression $n_\mathbf{b}$, $o_\mathbf{b}$ and $s_{x\mathbf{b}}^2$ are the indirect choice variables, as they are affected by one's choice of blocking, while $n$, $\sigma^2$, $\mu_1$ and $\mu_0$ are (assumed) known parameters. Specifically, we assume we have ex ante knowledge that $\sigma^2=1$ and $\left(\mu_1 - \mu_0\right)^2=2$. Based on the current sample draw we can calculate the variance of each blocking, as presented in the fourth column of Table \ref{tab-bestblocking}.
	
	As seen in the table, the best fixed-sized blocking, $\{\{1,1\}, \{1,0\}, \{0,0\}\}$, produces a conditional variance of $0.889$, while the best threshold blocking, $\{\{1,1,1\}, \{0,0,0\}\}$, produces a lower variance at $0.750$.
\end{proof}

\subsection{Surrogate objective functions} \label{surrogate}

The previous theorems implicitly assume that the objective function is known. In many experiments, the goal of blocking cannot be precisely quantified, or even well-estimated, when the blocks are constructed. The objective function is unknown. The second example above is a common such case: to derive the blockings' variances requires detailed knowledge of the outcome model. With few exceptions, this information is inaccessible. Instead, we must find some other function that we believe captures the relevant features of the true, but inaccessible, objective. When we want to reduce variance, we would typically use some measure of covariate balance. We will investigate this setting in the coming sections, but briefly the idea is that estimator variance depends on how similar the treatment groups are with respect to potential outcomes, and, as units with similar covariate values tend to have similar potential outcomes, striving for covariate balance tends to lower variance.

Borrowing terminology from engineering, I will call any function that takes the place of the true objective function in the optimization for a \emph{surrogate objective function} \cite[see e.g.,][]{Queipo2005}. While one always would prefer to use the true objective, when that is impossible, using some other function, which in some loose sense is associated with the true objective, can provide a good \emph{feasible} alternative. Whenever a surrogate is used, we do not know exactly how blockings map to our objective, and there is no longer a guarantee that threshold blocking yields the best solution.

The performance of a surrogate depends on how well it corresponds to the true objective. If the two functions track each other closely, so that the surrogate's optimum is close to the true optimum, using the surrogate will naturally result in near-optimal blockings. However, whenever the correspondence is not perfect, there can be misleading optimums---sub-optimal blockings which the surrogate wrongly indicates as optimal. When there are such optimums, the method with the best performance in the surrogate does not necessarily lead to the best performance in the true objective.

As discussed in the preceding sections, the difference between threshold and fixed-sized blocking is their search spaces. By having a larger search space, threshold blocking will find a weakly better solution with respect to the surrogate. This might, however, be a misleading optimum. Whenever that is the case, the restricted search space of fixed-sized blocking could shield of the misleading optimums, so that the local optimum in its search space is closer to the true, but unknown, optimum. Generally, when the quality of the surrogate is low, the risk for misleading optimums increases. Thus, the increased search space is likely to be most useful when the surrogate tracks the true objective closely.\footnote{Of course, if the restricted search space was a random subset of the larger space, this would not happen on average. However, when variance is the objective, the search set of fixed-sized blocking differs systematically from that of threshold blocking in aspects relevant to performance.}

As an illustration, consider if we were to use the objective function in the first example as a surrogate for the objective in the second example. The two functions are very similar, albeit not identical. Inspecting Table \ref{tab-bestblocking}, we find a correlation coefficient of 0.9. Being a high quality surrogate, it does not produce misleading minimums---the global minimums of both functions are for the same blocking. Subsequently, the same blocking is produced with the surrogate as with the true objective and, as before, threshold blocking outperforms fixed-sized blocking.

Now consider what happens when we change the predictiveness of the covariate so that $(\mu_1 - \mu_0)^2=0.5$ (from the previous value of 2), effectively making the signal-to-noise ratio, $(\mu_1 - \mu_0)^2/\sigma^2$, lower. As discussed in the coming section, one of the most important factors governing this surrogate's quality is how predictive the covariates are of the outcome. Lowering the signal-to-noise ratio therefore decreases the quality of the surrogate, as indicated by a correlation coefficient of only 0.65. The change does not affect the covariates or their balance, thus the surrogate suggests the same blockings as when $(\mu_1 - \mu_0)^2=2$. The decrease in quality has, however, introduced a misleading optimum. Specifically, the variances of the blockings suggested by the surrogate are now:
\begin{eqnarray*}
	\Var\left(\hat{\delta}\middle|\mathbf{x},\mathbf{B}=\{\{1,1\}, \{1,0\}, \{0,0\}\}\right)& = & 0.722,
	\\
	\Var\left(\hat{\delta}\middle|\mathbf{x},\mathbf{B}=\{\{1,1,1\}, \{0,0,0\}\}\right)& = & 0.750.
\end{eqnarray*}
While with a narrow margin, fixed-sized blocking produces a lower variance. The surrogate's minimum at $\{\{1,1,1\}, \{0,0,0\}\}$ is misleading as the minimum of the true objective is at  $\{\{1,1\}, \{1,0\}, \{0,0\}\}$. Using fixed-sized blocking removes the misleading blocking from the search space, and it can find the true optimum.

\section{Unconditional variance as objective} \label{sec-variance}

The typical blocking scenario is when the scientist employs blocking to reduce variance and uses covariate balance as a surrogate. In this section, I will provide a closer investigation of the determinants of the performance of blocking in that setting. Following most previous studies, I investigate the unconditional estimator variance \citep[see, e.g.,][and the references therein]{Bruhn2009}. While blockings are derived after covariates are observed, which would motivate a focus on the conditional variance, there are two good reasons why the unconditional version is of greatest interest.

A conditional variance is always relative to some sample draw. The performance with one sample is, however, not necessarily representative of other draws: the conditional variance is often sensitive to small differences in the composition of units. In fact, as discussed in Appendix \ref{app-conditional-worse}, one can often construct examples where any blocking method would lead to both higher and lower conditional variance than most other methods, including no blocking. Our conclusions would in that case depend on our choice of sample. The unconditional variance avoids such situations, and allows us to make the most general comparisons. Furthermore, scientists should be interested to commit to an experimental design before collecting their samples as this can greatly improve the credibility of the findings \citep[see, e.g.,][]{Miguel2014}. One must then choose blocking method before observing the covariates, making the unconditional variance the relevant measure.

Unlike previous analytical comparisons of the unconditional estimator variances between blocking methods \citep[e.g.,][]{Abadie2008,Imai2008,Imbens2011}, I do not assume sampling of ready-made blocks. Instead, I consider experiments using ordinary random sampling of units, so that the blocking is a function of the samples' covariate distributions. Assuming block-level sampling is not only at odds with the typical experiment, it also hides some critical aspects. First, different blocking methods require different block structures. For example, fixed-sized blocking requires all blocks to be of a certain size, while threshold blocking allows variable-sized blocks. If we assume sampling of blocks, the same sampling methods cannot be used in both cases as they cannot reproduce the implied structure for both of the methods. Any comparison would thereby be affected by changes both in the blocking and sampling methods.

Second, even if the same block-level sampling method could be used for several blocking methods, the assumption presumes a certain block quality. In reality, quality is a function of the experimental design, i.e., exactly what is studied. For example, it is affected by the choice of surrogate, sample size and, most relevant here, blocking method. Assuming that blocks are sampled would disregard difference in these aspects, unless the assumed sampling method is adjusted accordingly---something that would be equivalent to assuming unit-level sampling in the first place. These problems become particularly troublesome when one assumes sampling of certain number of identical, or near-identical, units with respect to their covariates. This assumption guarantees that homogeneous fixed-sized blocks can be formed and thereby disregards the key disadvantage of fixed-sized block: the strict structure almost always makes such blocks impossible.

While ordinary random sampling brings us closer to the typical experiment and provides some essential insights, it severely complicates the analysis. By assuming block-level sampling, one does not need to be bothered by how the blocks are formed; with unit sampling, we need to derive the exact mapping from observed covariates to blocks to get closed-form expressions. This task is far from trivial. Generally, the only viable route is to restrict the focus to simple covariate distributions, as I do when such expressions are derived in the first part of this section. In the second part, we focus on general properties of this mapping and need not derive the exact blocking for every possible sample draw.

To illustrate how the methods can affect the unconditional variance, I will start my investigation by revisiting a discussion on the performance of paired blocking in the past literature and show how threshold blocking enters into it. I then continue by deriving a decomposition of the unconditional variance for any blocking method using balanced block randomization and the within-block difference-in-means estimator. That is, a decomposition that is valid for all common blocking methods. This shows that the performance depends on primarily three factors: the informational content of the covariates with respect to the outcome (i.e., how predictive they are), to which degree the method can use this information (i.e., the quality of the surrogate and the method's ability to optimize it), and how much variation the method introduces in the number of treated in each block.

\subsection{Threshold blocking can be both best and worst} \label{bestworst}

A common recommendation when designing experiments is that one should always block one's samples. This is, for example, captured by the following quote from \citet[p. 48]{Imai2009}. While speaking of experiments with treatments at the cluster level, their argument is applicable also to individual level treatments:
\begin{quote}
	``[R]andomization by cluster without prior construction of matched pairs, when pairing is feasible, is an exercise in self-destruction.''
\end{quote}
\noindent This recommendation often presumes the use of paired blocking (i.e., matched pairs), and is then motivated by that the blocking cannot result in the higher variance than no blocking and that it, when covariates are informative, leads to the greatest reduction in variance \citep[see, e.g., the discussion in][]{Imbens2011}.\footnote{This statement is somewhat delicate. \citet{Imai2008} shows that paired blocking can produce a higher unconditional variance than no blocking if there is an expected negative correlation in the potential outcomes within pairs. A negative correlation implies a method of forming pairs that is worse than random matching. Apart from bizarre approaches that actively seek to decrease covariate balance, it hard to imagine such ill-performing methods. Even the most naive methods will increase covariate balance on average.}

To see how threshold blocking fits into this discussion, consider a situation similar to the examples above. In an experiment with two treatment conditions, we draw a random sample of $n$ units. We restrict $n$ to be even to facilitate paired blocking. We observe a single binary covariate and use the average Euclidean distance to form the blocks, as above. Treatments are assigned using balanced block randomization, and effects are estimated with the within-block difference-in-means estimator.

Unlike in the previous section, we can no longer consider particular sample draws. As the unconditional variance is the expectation over all possible samples, we must instead focus on the full covariate distribution. For simplicity, assume that $x_i$ is an independent fair coin, so that $\Pr(x_i=1)=0.5$. Furthermore, as the optimal blockings may differ between samples, we must derive the blocking each sample would produce (i.e., we cannot condition on $\mathbf{B}$). That is, we must explicitly include the mapping between the covariate distribution in the sample and the resulting blocking for each method. We will here consider three methods: complete randomization (i.e., no blocking), denoted with $\mathcal{C}$; fixed-sized blocking with a size requirement of two (i.e., paired blocking), denoted with $\mathcal{F}_2$; and threshold blocking also with a size requirement of two, denoted with $\mathcal{T}_2$.

In Appendix \ref{app-unconditional}, I show that, when making the same three assumptions as in Section \ref{thresholdb-adv}, the normalized unconditional variance for the three designs are given by:
\begin{eqnarray*}
	n\Var(\hat{\delta}_\mathcal{C}|\mathcal{C}) &=& 4\sigma^2 + \left(\mu_1 - \mu_0\right)^2,
	\\ [0.6em]
	n\Var(\hat{\delta}_{\mathcal{F}_2}|\mathcal{F}_2) &=& 4\sigma^2 + \frac{2\left(\mu_1 - \mu_0\right)^2}{n},
	\\ [0.6em]
	n\Var(\hat{\delta}_{\mathcal{T}_2}|\mathcal{T}_2) &=& 4\sigma^2 + \frac{8\left(\mu_1 - \mu_0\right)^2}{2^n} + \frac{3\left(2^{n-1} - 2n\right)\sigma^2}{2^nn},
\end{eqnarray*}
where $\mu_x = \E[y_i(0)|x_i=x]$ and $\sigma^2 = \Var[y_i(0)|x_i]$ are the conditional expectation and variance of the potential outcome, as above.

\begin{proposition} \label{prop-canbeworse}
	Threshold blocking can result in an unconditional estimator variance that is higher than when no blocking is done.
\end{proposition}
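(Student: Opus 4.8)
The plan is to read off the two closed-form expressions stated immediately before the proposition (and established in Appendix~\ref{app-unconditional}) and to exhibit a sample size together with a parameter configuration under which threshold blocking is strictly worse than no blocking. First I would subtract the two normalized variances:
\[
  n\Var(\hat{\delta}_{\mathcal{T}_2}|\mathcal{T}_2) - n\Var(\hat{\delta}_{\mathcal{C}}|\mathcal{C})
  = \left(\frac{8}{2^{n}} - 1\right)(\mu_1 - \mu_0)^2
    + \frac{3\left(2^{n-1} - 2n\right)}{2^{n}n}\,\sigma^2 .
\]
The key observation is that this difference is affine in the two free parameters $(\mu_1-\mu_0)^2 \geq 0$ and $\sigma^2 \geq 0$, with coefficients depending only on $n$, so the whole question reduces to understanding the signs of those two coefficients.

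Next I would analyze the coefficients. The coefficient on $(\mu_1-\mu_0)^2$ equals $8/2^{n}-1$, which is strictly negative for every $n \geq 4$; this is the channel through which the covariate-balance surrogate works in threshold blocking's favour. The coefficient on $\sigma^2$ equals $3(2^{n-1}-2n)/(2^{n}n)$, which is zero at $n=4$ and strictly positive for every even $n \geq 6$, since $2^{n-1} > 2n$ there; this is the penalty from the extra variation in the number of treated that an odd-sized threshold block can introduce. Hence, for any even $n \geq 6$, the difference is the sum of a non-positive term and a strictly positive term, and it becomes positive exactly when the signal-to-noise ratio $(\mu_1-\mu_0)^2/\sigma^2$ is small enough---that is, when the covariate is only weakly predictive of the outcome.

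To conclude I would record one concrete witness. Fixing $n=6$, the coefficients are $-7/8$ and $5/32$, so the difference equals $\tfrac{5}{32}\sigma^2 - \tfrac{7}{8}(\mu_1-\mu_0)^2$, which is strictly positive whenever $\sigma^2 > \tfrac{28}{5}(\mu_1-\mu_0)^2$; for instance $\sigma^2 = 1$ and $(\mu_1-\mu_0)^2 = 1/10$ give a difference of $11/160 > 0$, and one may keep $(\mu_1-\mu_0)^2$ strictly positive so the example is non-degenerate. The only point requiring care is that the construction genuinely needs $n$ not too small: at $n=2$ and $n=4$ the penalty coefficient is not positive, and indeed for those $n$ threshold blocking is never worse than complete randomization, so one must take $n \geq 6$. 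Beyond that there is no real obstacle, since the substantive work---deriving the variance formulas from unit-level rather than block-level sampling---has already been carried out in the appendix.
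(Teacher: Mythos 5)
Your proposal is correct and follows essentially the same route as the paper: both compute the difference $n\Var(\hat{\delta}_{\mathcal{T}_2}|\mathcal{T}_2) - n\Var(\hat{\delta}_{\mathcal{C}}|\mathcal{C})$ from the closed-form expressions of Appendix~\ref{app-unconditional} and observe that the $\sigma^2$ coefficient is positive for $n>4$ while the $(\mu_1-\mu_0)^2$ coefficient is negative, so a sufficiently low signal-to-noise ratio makes the difference positive. The only difference is cosmetic: the paper takes the degenerate witness $(\mu_1-\mu_0)^2=0$, whereas you exhibit a non-degenerate one at $n=6$, which is a mild strengthening but not a different argument.
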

\begin{proof}
	Consider the difference between threshold blocking and complete randomization in the current setting:
	\begin{eqnarray*}
		n\Var(\hat{\delta}_{\mathcal{T}_2}|\mathcal{T}_2) - n\Var(\hat{\delta}_\mathcal{C}|\mathcal{C}) &=& \frac{3\left(2^{n-1} - 2n\right)\sigma^2}{2^nn} - \frac{(2^n - 8)\left(\mu_1 - \mu_0\right)^2}{2^n}
	\end{eqnarray*}
	When, for example, the covariate is uninformative, so that $\left(\mu_1 - \mu_0\right)^2=0$, this difference is positive for all $n>4$.
\end{proof}

Proposition \ref{prop-canbeworse} shows that if threshold blocking is used when the covariates are uninformative, the unconditional variance is higher than when no blocking is done. This is the case even if covariate balance is improved, and even when there is no negative correlation in the potential outcomes within the blocks. On the contrary, in this case, fixed-sized blocking can never have higher variance than no blocking (as we have $n\geq 2$):
\begin{eqnarray*}
	n\Var(\hat{\delta}_{\mathcal{F}_2}|\mathcal{F}_2) - n\Var(\hat{\delta}_\mathcal{C}|\mathcal{C}) &=& \frac{(2 - n)\left(\mu_1 - \mu_0\right)^2}{n} \leq 0.
\end{eqnarray*}

\begin{proposition} \label{prop-canbebetter}
	Threshold blocking can result in an unconditional estimator variance that is lower than for fixed-sized blocking.
\end{proposition}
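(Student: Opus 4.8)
The plan is to work directly with the three closed-form expressions for the normalized unconditional variances stated just above, since the proof of Proposition~\ref{prop-canbeworse} already shows that this simple binary-covariate setting is rich enough to separate the methods. All that is needed is to exhibit parameter values $(\sigma^2,\,(\mu_1-\mu_0)^2,\,n)$ --- with $n$ even, so that paired blocking is defined --- for which $n\Var(\hat{\delta}_{\mathcal{T}_2}|\mathcal{T}_2) < n\Var(\hat{\delta}_{\mathcal{F}_2}|\mathcal{F}_2)$.

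First I would subtract the two expressions. The common $4\sigma^2$ term cancels, leaving
\begin{equation*}
	n\Var(\hat{\delta}_{\mathcal{T}_2}|\mathcal{T}_2) - n\Var(\hat{\delta}_{\mathcal{F}_2}|\mathcal{F}_2) = \left(\mu_1 - \mu_0\right)^2\left(\frac{8}{2^n} - \frac{2}{n}\right) + \frac{3\left(2^{n-1} - 2n\right)\sigma^2}{2^n n}.
\end{equation*}
I would then bound the two pieces separately. The coefficient $\frac{8}{2^n} - \frac{2}{n}$ multiplying $(\mu_1 - \mu_0)^2$ is strictly negative whenever $2^n > 4n$, i.e.\ for every $n \geq 5$, and $n$ times it tends to $-2$. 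The remaining $\sigma^2$ term is positive for $n \geq 5$ but vanishes like $\frac{3\sigma^2}{2n}$, so $n$ times it tends to $\frac{3\sigma^2}{2}$. Hence $n$ times the whole difference tends to $\frac{3\sigma^2}{2} - 2(\mu_1 - \mu_0)^2$, which is negative precisely when $(\mu_1 - \mu_0)^2 > \frac{3}{4}\sigma^2$ --- that is, whenever the covariate is predictive enough, exactly the message advertised in the introduction.

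This yields the cleanest route: fix $\sigma^2$, choose any predictiveness level with $(\mu_1 - \mu_0)^2 > \frac{3}{4}\sigma^2$, and take $n$ sufficiently large and even; the display above is then negative and threshold blocking strictly beats paired blocking. Mirroring the style of the earlier proofs, I would actually just record one concrete witness so that no limiting argument is required at all --- for instance $\sigma^2 = 1$, $(\mu_1 - \mu_0)^2 = 2$, $n = 6$, where the two pieces evaluate to $-\frac{5}{12}$ and $\frac{5}{32}$ and sum to $-\frac{25}{96} < 0$. The only points that need any care are tracking which $n$ make each inequality valid and keeping $n$ even; beyond that the argument is a one-line computation, so I do not expect a genuine obstacle here. (The interpretive weight of the result --- that the advantage is driven by the signal-to-noise ratio --- is better handled in the surrounding discussion and the decomposition of Section~\ref{sec-variance} than inside this proof.)
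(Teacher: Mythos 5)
Your proposal is correct and follows essentially the same route as the paper: both subtract the two closed-form unconditional variance expressions and show the difference is negative when the covariate is sufficiently predictive, arriving at the identical condition $4(\mu_1-\mu_0)^2 > 3\sigma^2$ (the paper reaches it by factoring the difference as $\frac{2^{n-1}-2n}{2^n n}\bigl(3\sigma^2 - 4(\mu_1-\mu_0)^2\bigr)$, valid for all $n>4$, rather than via your limit-plus-witness argument). Your concrete check at $\sigma^2=1$, $(\mu_1-\mu_0)^2=2$, $n=6$ is accurate and makes the argument fully rigorous without the asymptotics.
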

\begin{proof}
	Consider the difference between threshold and fixed-sized blocking in the current setting:
	\begin{eqnarray*}
		n\Var(\hat{\delta}_{\mathcal{T}_2}|\mathcal{T}_2) - n\Var(\hat{\delta}_{\mathcal{F}_2}|\mathcal{F}_2) &=& \frac{2^{n-1} - 2n}{2^{n}n}\left(3\sigma^2 - 4\left(\mu_1 - \mu_0\right)^2\right).
	\end{eqnarray*}
	Whenever $3\sigma^2 < 4\left(\mu_1 - \mu_0\right)^2$ and $n>4$, this difference in negative.
\end{proof}

Proposition \ref{prop-canbebetter} shows that paired blocking is not necessarily the best method. There are situations, for example, in this proof when the covariates are quite predictive, where threshold blocking will result in a lower variance.

The propositions show that there is no one-size-fits-all blocking method. In some situations threshold blocking will be superior to a fixed-sized design, and in other situations, the opposite will be true. The decomposition in the next section will make these results understandable and offers some guidance how to choose blocking method.

\subsection{Decomposing the unconditional variance} \label{sec-decomp}

The following decomposition will show that three factors affect the resulting unconditional variance of blocking methods. It extends beyond the three methods considered so far and applies to all blocking methods using the standard within-block difference-in-mean estimator, no matter how the blocks are formed. The one restriction is that it requires that the methods use covariates to form their blocks. Covariates are, however, meant quite widely, including any pre-experimental information. This is therefore true almost by definition; all variables that are available at the time of blocking are covariates under this definition. Blocking on past observations of the outcome or on an estimated prognostic score \citep{Barrios2014} are, for example, both considered feasible.

I will continue to assume random sampling from an infinite population and constant treatment effects, as these greatly simplify the derivation without clouding the intuitions. I will, however, not make parametric assumptions with respect to either the expected potential outcomes or their variances. Specifically, we draw a random sample of size $n$ and observe some set of covariates for each unit ($\mathbf{x}_i$), but we impose no restrictions on the expected outcome ($\mu_{\mathbf{x}_i}$). Focus will be on an arbitrary design, $\mathcal{D}$, and its normalized unconditional variance, $n\Var(\hat{\delta}_\mathcal{D}|\mathcal{D})$. While we do not need to derive the exact mapping, let $\mathcal{D}(\mathbf{x})$ be a function that gives the blocking that the design would produce from some set of covariates, $\mathbf{x} = \{\mathbf{x}_1, \cdots, \mathbf{x}_n\}$.

To start the investigation we use a rather well-known decomposition of the unconditional variance in an experiment \citep[see, e.g.,][]{Abadie2008,Kallus2013}. The law of total variance allows us to differentiate between the uncertainty that stems from sampling and that from treatment assignment:
\begin{eqnarray*}
	n\Var(\hat{\delta}_\mathcal{D}|\mathcal{D}) &=& n\E\left[\Var(\hat{\delta}_\mathcal{D}|\mathbf{x},\mathcal{D})\right] + n\Var\left[\E(\hat{\delta}_\mathcal{D}|\mathbf{x},\mathcal{D})\right].
\end{eqnarray*}
The first term captures that we cannot know the treatment effect in a particular sample with certainty. If the treatment groups were identical with respect to their potential outcomes, we could derive the effect without error---the groups provide a perfect window into each counterfactual world. However, as we cannot observe all potential outcomes, we can never ensure, or even confirm, that the groups are identical in this aspect. We must concede that there will always be small differences between the groups which, while averaging to zero, will led to variance. This variance is captured by a positive $\Var(\hat{\delta}_\mathcal{D}|\mathbf{x},\mathcal{D})$, and its average over sample draws is the first term.

Even if we somehow could calculate the true treatment effect for each sample draw, so that the first term becomes zero, the estimator would still not be constant. While we might know the effect in the sample at hand, we do not know whether that sample is representative of the population. Much in the same way that a non-causal inference, say the average age in some population, cannot be established from a sample without uncertainty, we cannot do the same in an experiment. The second term captures just this. As all considered designs are unbiased, $\E(\hat{\delta}_\mathcal{D}|\mathbf{x},\mathcal{D})$ is equal to the treatment effect in a sample with covariates $\mathbf{x}$, thus this term gives the variance of the treatment effect with respect to sample draws. 

This classical decomposition connects the unconditional variance to the two main parts of the design. The first term is due to unbalanced treatment groups and can therefore be improved with better assignments. The second term is due to unrepresentative samples and can only be lowered by making the treatment effect in the sample more similar to the effect in the population (e.g., using stratified sampling). As blocking does not change the sample, it can only affect the variance by lowering the first term. The current investigation focuses on that term and shows that it can be further decomposed.

\begin{proposition} \label{prop-decomp}
	Given constant treatment effects, the unconditional variance of the block-weighted difference-in-means estimator with any experimental design using blocking can be decomposed into three terms:
	\begin{eqnarray*}
		n\Var(\hat{\delta}_\mathcal{D}|\mathcal{D}) &=& 4 W_1 + 4 W_2 + 2 W_3,
		\\ [2em]
		W_1 &=& \E\left[\Var\left(y_i(0)\middle|\mathbf{x}_i\right)\right],
		\\ [0.7em]
		W_2 &=& \E\left[\sum_{\mathbf{b}\in\mathcal{D}(\mathbf{x})} \frac{n_\mathbf{b}}{n} s_{\mu\mathbf{b}}^2\right],
		\\ [0.7em]
		W_3 &=& \E\left[\sum_{\mathbf{b}\in\mathcal{D}(\mathbf{x})} \frac{n_\mathbf{b}}{n} \Std\left(\frac{1}{T_\mathbf{b}}\middle|n_\mathbf{b}\right) \E\left(s_{y\mathbf{b}}^2\middle|\mathbf{x}\right)\right],
	\end{eqnarray*}
	where $T_\mathbf{b}$ is the number of treated units; $s_{\mu\mathbf{b}}^2$ is the sample variance of the predicted potential outcome; and $s_{y\mathbf{b}}^2$ is the sample variance of the potential outcome, all in block $\mathbf{b}$:
	\begin{eqnarray*}
		T_\mathbf{b} &=& \sum_{i \in \mathbf{b}}T_i
		\\
		s_{\mu\mathbf{b}}^2 &=& \frac{1}{n_\mathbf{b}-1}\sum_{i \in \mathbf{b}}\left(\E\left[y_i(0)|\mathbf{x}_i\right] - \sum_{j \in \mathbf{b}}\frac{\E\left[y_j(0)|\mathbf{x}_j\right]}{n_\mathbf{b}}\right)^2,
		\\
		s_{y\mathbf{b}}^2 &=& \frac{1}{n_\mathbf{b}-1}\sum_{i\in\mathbf{b}}\left(y_i(0) - \sum_{j\in\mathbf{b}}\frac{y_j(0)}{n_\mathbf{b}}\right)^2.
	\end{eqnarray*}
\end{proposition}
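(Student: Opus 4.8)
The plan is to start from the classical two-term decomposition of the unconditional variance that has already been displayed in the text, and show that the second term vanishes and the first term splits into $4W_1 + 4W_2 + 2W_3$. Since all considered designs are unbiased and (by the constant-treatment-effect assumption) $\E(\hat\delta_\mathcal{D}|\mathbf{x},\mathcal{D})$ equals the in-sample average treatment effect, which is just $\delta$, the second term $n\Var[\E(\hat\delta_\mathcal{D}|\mathbf{x},\mathcal{D})]$ is the variance of a constant and hence zero. So everything reduces to computing $n\E[\Var(\hat\delta_\mathcal{D}|\mathbf{x},\mathcal{D})]$.

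Next I would compute the inner conditional variance $\Var(\hat\delta_\mathcal{D}|\mathbf{x},\mathcal{D})$ for a fixed covariate draw and fixed blocking $\mathbf{B}=\mathcal{D}(\mathbf{x})$. Because balanced block randomization assigns treatments independently across blocks, the conditional variance is a block-weighted sum $\sum_{\mathbf{b}} (n_\mathbf{b}/n)^2 \Var(\hat\delta_\mathbf{b}|\mathbf{x})$ of the within-block estimator variances. For a single block, the difference-in-means estimator is a contrast of two sample means drawn without replacement from the $n_\mathbf{b}$ units, with $T_\mathbf{b}$ treated and $n_\mathbf{b}-T_\mathbf{b}$ control; conditioning further on $T_\mathbf{b}$, a standard finite-population sampling argument (the same one underlying Appendix \ref{app-conditional}) gives the within-block variance in terms of $s_{y\mathbf{b}}^2$, the sample variance of the potential outcomes in the block, multiplied by a factor depending on $n_\mathbf{b}$ and $T_\mathbf{b}$ of the form $1/T_\mathbf{b} + 1/(n_\mathbf{b}-T_\mathbf{b})$ (up to finite-population corrections). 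I would then take the expectation over the randomness in $T_\mathbf{b}$ (which is $\lfloor n_\mathbf{b}/2\rfloor$ or $\lceil n_\mathbf{b}/2\rceil$ each with probability one half), and use $\E[1/T_\mathbf{b} + 1/(n_\mathbf{b}-T_\mathbf{b})\,|\,n_\mathbf{b}]$ written as a mean plus a variance term — this is precisely where the $\Std(1/T_\mathbf{b}\,|\,n_\mathbf{b})$ object in $W_3$ comes from — and the leading constant mean piece will produce the $4\,\E(s_{y\mathbf{b}}^2|\mathbf{x})$-type contribution.

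The remaining algebra is to take the outer expectation over $\mathbf{x}$ and split $\E[s_{y\mathbf{b}}^2|\mathbf{x}]$ into a "signal" part and a "noise" part. Writing each potential outcome as $y_i(0) = \mu_{\mathbf{x}_i} + \varepsilon_i$ with $\E[\varepsilon_i|\mathbf{x}_i]=0$, the conditional expectation of the within-block sample variance decomposes into the sample variance of the predicted outcomes $s_{\mu\mathbf{b}}^2$ plus an average of the conditional residual variances $\Var(y_i(0)|\mathbf{x}_i)$; aggregating the latter across all blocks, since the block weights $n_\mathbf{b}/n$ sum to one over a partition of the sample, it collapses to $\E[\Var(y_i(0)|\mathbf{x}_i)] = W_1$ (using that units are i.i.d. from the population). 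The $s_{\mu\mathbf{b}}^2$ piece, after taking the outer expectation, is exactly $W_2$. Matching the constants from the within-block variance formula and the finite-population corrections to the claimed $4W_1 + 4W_2 + 2W_3$ is then bookkeeping; I would lean on the constant-effect assumption throughout so that the potential outcomes under treatment contribute the same variance structure and no extra cross terms survive.

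The main obstacle is the $T_\mathbf{b}$-averaging step: one has to handle even versus odd $n_\mathbf{b}$ carefully (the indicator $o_\mathbf{b}$ appeared for just this reason in the conditional-variance appendix), keep track of the without-replacement finite-population correction factors, and algebraically reorganize $\E[1/T_\mathbf{b} + 1/(n_\mathbf{b}-T_\mathbf{b})]$ so that the "extra" variability in the number of treated is isolated into a clean $\Std(1/T_\mathbf{b}|n_\mathbf{b})$ factor multiplying $\E(s_{y\mathbf{b}}^2|\mathbf{x})$ rather than being entangled with the signal/noise split. Getting that reorganization to land on precisely the stated $W_3$ (including why it is a standard deviation and not a variance, which suggests a further use of the relation between $s_{y\mathbf{b}}^2$ and $s_{\mu\mathbf{b}}^2$ inside the expectation) is the delicate point; everything else is the law of total variance plus routine finite-population sampling identities. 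I expect this computation to be deferred to an appendix, paralleling Appendix \ref{app-conditional}.
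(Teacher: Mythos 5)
Your proposal follows essentially the same route as the paper's Appendix \ref{app-decomp}: the law of total variance kills the sampling term under constant effects and unbiasedness; independence across blocks reduces the conditional variance to a block-weighted sum; $\E\left(s_{y\mathbf{b}}^2\middle|\mathbf{x}\right)$ splits into $\frac{1}{n_\mathbf{b}}\sum_{i\in\mathbf{b}}\Var\left(y_i(0)\middle|\mathbf{x}_i\right)$ plus $s_{\mu\mathbf{b}}^2$; and the former aggregates to $W_1$ because the blocks partition the sample and units are i.i.d. All of that is correct and matches the paper's derivation.

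The one step that would fail as literally described is your account of where $\Std\left(1/T_\mathbf{b}\middle|n_\mathbf{b}\right)$ comes from. Under balanced block randomization $T_\mathbf{b}(n_\mathbf{b}-T_\mathbf{b})=(n_\mathbf{b}^2-o_\mathbf{b})/4$ whichever of its two values $T_\mathbf{b}$ takes, so $\frac{1}{T_\mathbf{b}}+\frac{1}{n_\mathbf{b}-T_\mathbf{b}}=\frac{4n_\mathbf{b}}{n_\mathbf{b}^2-o_\mathbf{b}}$ is \emph{deterministic} given $n_\mathbf{b}$: there is no ``mean plus variance'' decomposition of its expectation over $T_\mathbf{b}$ to exploit. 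What the paper does instead is purely algebraic. It writes the multiplier as $\frac{n_\mathbf{b}^2}{n_\mathbf{b}^2-o_\mathbf{b}}=1+\frac{o_\mathbf{b}}{n_\mathbf{b}^2-1}$ (using that $o_\mathbf{b}$ is binary), and then computes separately, from the two-point distribution of $T_\mathbf{b}$, that $\Std\left(1/T_\mathbf{b}\middle|n_\mathbf{b}\right)=\frac{2o_\mathbf{b}}{n_\mathbf{b}^2-1}$; the excess factor is thus identified \emph{after the fact} as $\tfrac{1}{2}\Std\left(1/T_\mathbf{b}\middle|n_\mathbf{b}\right)$, which is exactly why $W_3$ carries coefficient $2$ while $W_1$ and $W_2$ carry $4$. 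Your bookkeeping lands in the right place, but the standard deviation is a relabelling of $\frac{2o_\mathbf{b}}{n_\mathbf{b}^2-1}$ chosen for interpretability, not a quantity extracted by averaging $\frac{1}{T_\mathbf{b}}+\frac{1}{n_\mathbf{b}-T_\mathbf{b}}$ over the randomness in $T_\mathbf{b}$ --- so the ``delicate point'' you flag needs this algebraic identification rather than a probabilistic one.
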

Proposition \ref{prop-decomp} is proven in Appendix \ref{app-decomp}. While this decomposition is slightly more complicated than the previous, it too is rather intuitive. Specifically, it shows that the uncertainty stems from three sources: that the covariates does not provide full information about the potential outcomes ($W_1$), that the blocking methods might not construct perfectly homogeneous blocks ($W_2$), and that blocking might introduce variability in the number of the treated units in the blocks ($W_3$).

\subsubsection{The first term: $W_1$}

Intuitively, how well the treatment groups are balanced, and thereby the estimator variance, will depend on the variance of the potential outcomes---the more variation in the outcomes, the higher the risk of unbalanced treatment groups. In the extreme, when there is no variation, the groups are balanced by construction. With the law of total variance we can break the variance of the potential outcomes into two parts: $\Var\left[y_i(0)\right] = \E\left[\Var\left(y_i(0)\middle|\mathbf{x}_i\right)\right] + \Var\left[\E\left(y_i(0)\middle|\mathbf{x}_i\right)\right]$. Now, consider what we know about the potential outcomes.

As the considered blocking methods construct their blocks based on covariates (broadly defined), the only information we have about the potential outcomes are what the covariates provide. Or, formally, before the experiment is conducted, we can have no more information about unit $i$'s outcome than what is given by $\E\left[y_i(0)\middle|\mathbf{x}_i\right]$. If we employ a method, whatever it might be, that fully exploits this information, any variation between units that can be explained by $\E\left[y_i(0)\middle|\mathbf{x}_i\right]$ will go away. This type of explainable variation is captured by the second term of the outcome variance, $\Var\left[\E\left(y_i(0)\middle|\mathbf{x}_i\right)\right]$. In other words, when we fully exploit the covariate information, the remaining contribution of potential outcome variance will be the first part, $\E\left[\Var\left(y_i(0)\middle|\mathbf{x}_i\right)\right]$. Note that this corresponds exactly to $W_1$. This term captures that any blocking method based on covariates cannot lower the variance below what is made possible from the informational content of the covariates.

\subsubsection{The second term: $W_2$}

The first term established a lower bound---no blocking method can have a variance lower than this. This bound exists because we cannot use the potential outcomes directly. However, to reach the bound we must fully use the information provided by the expected potential outcomes, i.e.\ $\mu_{\mathbf{x}} = \E\left[y_i(0)\middle|\mathbf{x}_i=\mathbf{x}\right]$. Blocking methods use the covariates to form blocks and, subsequently, to fully use the information the blocks must contain units that are identical with respect to their expected outcomes. There are two reasons why such blockings are not constructed.

First, $\mu_{\mathbf{x}_i}$ is only the theoretical informational limit, usually we have considerably less information about the outcome model. Naturally, we cannot use information that we do not have. Second, even if we had the information, due to the required block structure, a perfectly homogeneous blocking might not exist. If, for example, a unit is unique in its expected outcome, it must be blocked with units with other values on $\mu_{\mathbf{x}}$. The second term, $W_2$, captures the variance that stems from these two sources. Whenever we lack the information or possibility to construct homogeneous blocks, we must assign units with different values on $\mu_{\mathbf{x}_i}$ to the same block; we thereby introduce variation in the predicted potential outcomes in the block: a positive $s_{\mu\mathbf{b}}^2$. The second term is the weighted expected value of $s_{\mu\mathbf{b}}^2$ across blocks and thus captures how heterogeneous blockings affect the variance.\footnote{We must take the expectation over the sum, as $s_{\mu\mathbf{b}}^2$ could be correlated both with the number of blocks and their size. If we assume that the sample variance is uncorrelated with the block structure (as with fixed-sized blockings), the second term simply becomes $W_2=\E\left[s_{\mu\mathbf{b}}^2\right]$.}

The second term will have a natural connection to covariate balance as the expected outcome is a function of the covariates. However, such connections are hard to quantify without parametric assumptions. There are, nonetheless, two important conclusions that do hold independently of the outcome model. First, by definition, whenever $\mathbf{x}_i=\mathbf{x}_j$, we have $\mu_{\mathbf{x}_i}=\mu_{\mathbf{x}_j}$. That is, if we can create a homogeneous blocking with respect to the covariates, the blocking must be homogeneous also with respect to the expected outcomes and thus the second term is zero. By perfectly balancing the covariates, we get, no matter the outcome model, a blocking that produces the lowest possible variance (disregarding the third term).

Second, if the covariates are irrelevant with respect to the potential outcome, that is $\E\left(y_i(0)\middle|\mathbf{x}_i\right)=\E\left(y_i(0)\right)$, we have $\mu_{\mathbf{x}_i}=\mu_{\mathbf{x}_j}$ for any $\mathbf{x}_i$ and $\mathbf{x}_j$. In this case, the second term will be zero no matter which blocking method we use. When covariates are irrelevant, all blocking methods are equally good at balancing the blocks.

By imposing some structure on the outcome model, we can derive more precisely the connection between expected outcomes and covariates, thereby gain an illustration of the typical behavior. Assume that the conditional expectation function is linear so that $\mu_{\mathbf{x}} = \alpha + \mathbf{x} \boldsymbol\beta$. As shown at the end of Appendix \ref{app-decomp}, the second term then becomes:
\begin{eqnarray*}
	W_2 &=& \boldsymbol\beta^T\E\left[\sum_{\mathbf{b}\in\mathcal{D}(\mathbf{x})} \frac{n_\mathbf{b}}{n} \mathbf{Q}_\mathbf{b}\right] \boldsymbol\beta,
\end{eqnarray*}
where $\mathbf{Q}_\mathbf{b}$ is the sample covariance matrix of the covariates in a block $\mathbf{b}$. The linear model allows us to separate the effect of the covariate balance ($\mathbf{Q}_\mathbf{b}$) from the effect of their predictiveness ($\boldsymbol\beta$). With this outcome model, any type of improvement in covariate balance (i.e., a $\mathbf{Q}_\mathbf{b}$ closer to zero) reduces the estimator variance. Still, when covariates are irrelevant (i.e., $\boldsymbol\beta=\mathbf{0}$), covariate balance does not affect the variance as the expected potential outcomes already are balanced.

The linear example makes clear that knowledge of the outcome model can greatly improve blockings. In this case, we know that the functional form is linear, but we do not necessarily know $\boldsymbol\beta$. Imbalances in covariates that are very predictive of the outcome (i.e., the corresponding coefficient in $\boldsymbol\beta$ is high in absolute terms) are much worse than imbalances in other covariates; imbalances in the former will lead to larger imbalances in the potential outcomes. We would, therefore, like our blocking method to focus on the most predictive covariates. The way to do this is to use a surrogate that puts more weight on the predictive covariates.

\subsubsection{The third term: $W_3$}

Even if we could construct blocks that are perfectly homogeneous, so the second term becomes zero, we have not necessarily reached the variance bound given by the first term.

Balanced block randomization divides each block as evenly as possible between the treatments. When a block has a size that is a multiple of the number of treatment conditions, this is trivial: just assign equally many units to all treatments. Even divisions are, however, not possible with odd-sized blocks. Instead, the blocks are split up to the nearest multiple, and the excess units are randomly assigned treatments. As a result, the number of units in each treatment condition may vary within a block. For example, consider a block with three units in an experiment with active and placebo treatment. It is impossible to partition this block evenly; we cannot split the excess unit between the two treatments. In this example, there must always be two units with active treatment and one with placebo, or \emph{vice versa}.

Consider the estimation of the potential outcomes when there are such imbalances. If we use an estimator that put equal weight on the units' outcomes (e.g., an ordinary difference-in-means estimator or Horvitz-Thompson-type estimator), each treatment condition in odd-sized blocks will not contribute equally to the estimate. In a block with three units and two treatment conditions, one condition will have twice as many units as the other---thus twice the contribution to the estimator. The composition of the treatment groups will, in these cases, not mirror the complete sample. Of course, as such imbalances are equally likely in both directions, the estimator is still unbiased. Allowing for such imbalances, however, partially disregards the information provided by the blocking and will increase variance.

The within-block difference-in-means estimator (as used in the decomposition) solves this problem by first estimating the effect in each block, and, in a second step, it aggregates the block estimates to an overall estimate. This effectively down-weights units in treatment conditions that are overrepresented within the block, and up-weights underrepresented units. This re-weighting ensures that each treatment group is a good representation of the overall sample. It will, however, also reduce the effective sample size---by varying the units' weights we do not fully use the information each unit provides. This is, of course, necessary if we want to balance the treatment groups, but it comes with a cost. This cost is captured in the third term.

The weight of a unit in active treatment is $1/T_\mathbf{b}$ (which has the same distribution as under control treatment due to symmetry in assignment). The factor $\Std\left(1/T_\mathbf{b}\middle|n_\mathbf{b}\right)$ in $W_3$ is therefore a measure of weight variation and captures its effect on the estimator's variance. With fixed-sized blockings, the units in each block will be split equally between the treatment conditions, and $1/T_\mathbf{b}$ will be constant. Whenever threshold blocking is used, however, the units' weights will differ between assignments; there will be variation in $1/T_\mathbf{b}$. Thus, with fixed-sized blocking, the third term will be zero, while it is typically positive with threshold blocking.

The best performing blocking method is the one that best balances the second and the third terms. The first term is common to all methods and thus not much to do about. There is, however, often a trade-off between the other two. To keep the number of treated units fixed---that is, to set the third term to zero---we must ensure that all blocks are multiples of the treatment conditions. While we can reach quite good balance with such a design, at some point the strict structure will constrain us. The only way to get additional improvements is to allow for odd-sized blocks. But when we do that, we introduce variability in the number of treated. The resulting improvement in covariate balance might lead to a decrease in the variance that offsets any increase in the third term, but this is in no way guaranteed. This trade-off demonstrates that a high quality surrogate and predictive covariates are especially useful with threshold blocking. In such cases, we have better knowledge about $s_{\mu\mathbf{b}}^2$ and can use the added flexibility to achieve improvements that are likely to offset the third term.

The three variance expressions in Section \ref{bestworst} provide a good example of the influence of these terms. With the outcome model in that section, the covariates contain no more information than to lower the variance to $4\sigma^2$. This is the first term of all expressions and corresponds to the term $4W_1$ in the decomposition. In the first expression---when no blocking is done---the second term (i.e., $W_2$) is large, reflecting that we can expect quite considerable imbalances without blocking. That method will, however, hold the treatment groups at a constant size, and the third term is zero. Fixed-sized blocking ensures a better balance, which is reflected in that its second term is much lower. As the blocks, by construction, are divisible with the number of treatments, the third term is zero also in this case. Turning to the last expression, we see that threshold blocking generates the lowest second term as it leads to even better balance. The third term is, however, no longer zero as odd-sized blocks are allowed. As shown in Proposition \ref{prop-canbebetter}, when covariates are predictive, the added balance awarded by threshold blocking offsets the variance increase due to the third term.

\section{Simulation results} \label{sec-sim}

Complementing the discussion in the previous section, I will here present a small simulation study investigating the performance of the blocking methods with two outcome models. As we forgo analytical results, we can allow for a slightly more realistic setting: compared to the previous sections the treatment effects are no longer assumed to be constant. With both models, we draw a random sample and observe a single real valued covariate ($x_i$) that is uniformly distributed in the interval from $-5$ to $5$ in the population. In the first model, the potential outcomes depend on this covariate and a standard normal noise term:
\begin{eqnarray*}
	y_i(1) &=& 2x_{i}^2 + \varepsilon_{1i}, 
	\\
	y_i(0) &=& 1.7x_{i}^2 + \varepsilon_{0i},
	\\
	x_{i} &\sim & \mathcal{U}\left(-5, 5\right),
	\\
	\varepsilon_{1i}, \varepsilon_{0i} &\sim & \mathcal{N}\left(0, 1\right).
\end{eqnarray*}
In the second model, the outcome is given only by the noise term:
\begin{eqnarray*}
	y_i(1) &=& \varepsilon_{1i},
	\\
	y_i(0) &=& \varepsilon_{0i},
	\\
	x_{i} &\sim & \mathcal{U}\left(-5, 5\right),
	\\
	\varepsilon_{1i}, \varepsilon_{0i} &\sim & \mathcal{N}\left(0, 1\right).
\end{eqnarray*}
The relevant difference between the two models is that it is only the first where the covariate provides any information about the outcome and, thus, only where blocking can be useful. 

Blocks will be formed based on the Euclidean distances between units within the blocks, i.e., the same surrogate as above. Four performance measures will then be investigated. The first is simply the expected value of the (surrogate) objective function, $\E\left[f(\mathbf{B})\right]$, i.e., the average within-block covariate distance. As this is used to construct the blocks, Theorem \ref{thres:thm:thres-better} applies, and we know that threshold blocking will exhibit the best performance. The other three measures are different variances of the block difference-in-means estimator: the unconditional variance (referred to as PATE), the variance conditional on covariates (CATE) and the variance conditional on potential outcomes (SATE). Using a more detailed conditioning set (e.g., SATE), removes more of the variance that is unaffected by blocking and thus better highlights differences in performance.

The variance conditional on covariates or potential outcomes (i.e., CATE or SATE) will depend on the specific sample we consider. Such conditioning will, as discussed in the previous section, not provide a good indication of the general performances. A method could perform well only with respect to some samples; comparisons between methods are therefore not necessarily fair. To avoid specifying particular samples, I will focus on the expected conditional variances. Due to unbiasedness, this is equal to the mean square error with respect to the corresponding conditional effect:
\begin{eqnarray*}
	\text{PATE:} && \E\left[\left(\hat{\delta}_\mathcal{D} - \E\left(\hat{\delta}_\mathcal{D}\right)\right)^2\right],
	\\
	\text{CATE:} && \E\left[\left(\hat{\delta}_\mathcal{D} - \E\left(\hat{\delta}_\mathcal{D}\middle| \mathbf{x}\right)\right)^2\right],
	\\
	\text{SATE:} && \E\left[\left(\hat{\delta}_\mathcal{D} - \E\left(\hat{\delta}_\mathcal{D}\middle| y_1(0), \cdots, y_n(0), y_1(1), \cdots, y_n(1)\right)\right)^2\right].
\end{eqnarray*}

\begin{table}[p]
	\centering
	\caption{Threshold blocking is best with informative covariates.} \label{tab-corr}
	\begin{tabularx}{0.8\textwidth}{@{\vrule height11pt  width0pt} p{4cm}CCCC}
		Relative performance & Objective & PATE & CATE & SATE \\ \midrule
		\\ [-2.5ex]
		\multicolumn{5}{@{\vrule height11pt  width0pt} l}{\textbf{Panel A:} Sample size $n=12$. } \\
		Complete rand. & 9.14 & 2.125 & 2.149 & 2.159 \\ 
		Fixed-sized bl. & 1.15 & 1.063 & 1.064 & 1.065 \\ [1.2ex]
		\multicolumn{5}{@{\vrule height11pt  width0pt} l}{\textbf{Panel B:} Sample size $n=24$.} \\
		Complete rand. & 20.048 & 3.865 & 4.053 & 4.137 \\ 
		Fixed-sized bl. &  1.255 & 1.155 & 1.169 & 1.176 \\ [1.2ex]
		\multicolumn{5}{@{\vrule height11pt  width0pt} l}{\textbf{Panel C:} Sample size $n=36$.} \\ 
		Complete rand. & 31.074 & 5.282 & 5.815 & 6.083 \\ 
		Fixed-sized bl. &  1.295 & 1.175 & 1.210 & 1.229 \\ [0.5ex]
		\bottomrule
		\\ [-1.9ex]
		\multicolumn{5}{p{0.76\textwidth}}{\footnotesize \textit{Note:} The table presents the performance of complete randomization and fixed-sized blocking relative to threshold blocking when the covariate are correlated with the potential outcomes, using the first data generating process presented in the text. The rows indicate blocking methods, and each cell is the ratio between the measure for the corresponding method and threshold blocking. The columns indicate different measures, where the first is the average value of the objective function, and the three following are the variance measures described in the text. The panels indicate different sample sizes. For example, the top rightmost cell shows that complete randomization produces a variance conditional on potential outcomes that is 2.159 times higher than the variance with threshold blocking. Each model has 1,000,000 simulated experiments based on 100,000 unique sample draws.} \\
	\end{tabularx}
	\bigskip
	\caption{But is less good when they are not.} \label{tab-uncorr}
	\begin{tabularx}{0.8\textwidth}{@{\vrule height11pt  width0pt} p{4cm}CCCC}
		Relative performance & Objective & PATE & CATE & SATE \\ \midrule
		\\ [-2.5ex]
		\multicolumn{5}{@{\vrule height11pt  width0pt} l}{\textbf{Panel A:} Sample size $n=12$. } \\
		Complete rand. & 9.14 & 0.9841 & 0.9841 & 0.9711 \\ 
		Fixed-sized bl. & 1.15 & 0.9850 & 0.9850 & 0.9717 \\ [1.2ex]
		\multicolumn{5}{@{\vrule height11pt  width0pt} l}{\textbf{Panel B:} Sample size $n=24$.} \\
		Complete rand. & 20.048 & 0.9841 & 0.9841 & 0.9703 \\ 
		Fixed-sized bl. &  1.255 & 0.9839 & 0.9839 & 0.9695 \\ [1.2ex]
		\multicolumn{5}{@{\vrule height11pt  width0pt} l}{\textbf{Panel C:} Sample size $n=36$.} \\ 
		Complete rand. & 31.074 & 0.9844 & 0.9844 & 0.9717 \\ 
		Fixed-sized bl. &  1.295 & 0.9842 & 0.9842 & 0.9712 \\ [0.5ex]
		\bottomrule
		\\ [-1.9ex]
		\multicolumn{5}{p{0.76\textwidth}}{\footnotesize \textit{Note:} The table presents the performance of complete randomization and fixed-sized blocking relative to threshold blocking when the covariate are unrelated to the potential outcomes, using the second data generating process presented in the text. See the note of Table \ref{tab-corr} for further details.} \\
	\end{tabularx}
\end{table}

The results, as shown in Table \ref{tab-corr} and \ref{tab-uncorr}, are presented for complete randomization and fixed-sized blocking relative to threshold blocking. For example, a cell with a value of two indicates that the measure for the corresponding method is twice as high as for threshold blocking.

Starting with the first table, we see that threshold blocking produces a lower value for all four measures for every sample size. As the objective function, presented in the first column, is known when the blocks are constructed, there are large improvements when using threshold blocking. Complete randomization has an average value of the objective function that is between 9 and 31 times higher than threshold blocking. Compared to fixed-sized blocking the differences are more modest with between 15\% and 30\% higher values on average. While most of the advantage with blocking occurs already with fixed-sized blocking, these results indicate that non-negligible improvements still can be made.

The three variance measures follow a similar pattern, although the advantage is not as large as for the objective function. This reflects both that covariate imbalances are not the only source of estimator variance, and that the surrogate does not perfectly mirror how covariates are related to the outcome. Still, complete randomization has a variance that is two to six times that of threshold blocking, and fixed-sized blocking is between 6\% and 23\% higher. The more detailed the conditioning set is, the higher the advantage of threshold blocking becomes. Conditioning on more sample information, as with the CATE and SATE, reduces the variance due to sampling but leaves the benefits of blocking intact.

Particularly noteworthy is the relative performance when the sample size increases. For the sizes considered here, threshold blocking performs better as the sample becomes larger.\footnote{As optimal blockings cannot be derived for larger sample sizes in reasonable time (see the concluding remarks), simulation studies are limited to the sizes considered here.} This can be explained by that the search space of threshold blocking grows at a much higher rate than for the other two methods. In other words, threshold blocking has many more opportunities for improvements in large samples. These improvement are also often of the form that a small change in a few blocks cascades through the sample and leads to improvements in many other blocks without changing their size. For illustration, consider a sample if units with covariate values $\{1,3,4,6,7,9,10,12,13, \cdots\}$. Paired blocking must block this as $\{\{1,3\},\{4,6\},\{7,9\},\{10,12\}, \cdots\}$, but adding only two odd-sized blocks, $\{\{1,3,4\},\{6,7\},\{9,10\},\{12,13\}, \cdots\}$ is made possible which has much better balance.

There is, however, an opposing effect when the sample grows bigger. If the support of the covariates is bounded, more units will fill up the covariate space. In other words, as the sample size increases, units' neighbors tend to move closer. While threshold blocking might still confer many improvements, in pure counts, these improvements will not be ``worth'' as much. When the covariate space is densely populated, even sub-optimal blockings tend to lead to rather good balance. At some point, when the covariate space is sufficiently populated, the performances are likely to start to converge. When the dimensionality of this space is high, it will not fill up as fast and convergence occurs later.

Turning to the second table, we see that the improvements in the objective, as presented in the first column, are identical to the first model. This is expected as the covariate distribution and surrogate is unchanged between the models. However, unlike the previous model, the covariate is now completely uninformative of the potential outcomes, and thus improvements in the surrogate do not translate to lower variance. As threshold blocking still introduces the variability in the number of treated units within some blocks, its variance is slightly higher compared to both complete randomization and fixed-sized blocking. This relative increase seems to be constant over different sample sizes.

\section{Concluding remarks} \label{sec-conclude}

When interpreting blocking as a pure optimization problem, the first part of this paper shows that threshold blocking is superior to a fixed-sized design. This interpretation requires that the objective function of true interest is known when the blocks are constructed. There are several situations where this is the case. For example, if blocking is done because of later sub-sample analyses or post-processing steps that require covariate balance, the true objective would be a known function of the covariates. In these cases, threshold blocking is guaranteed to provide the best performance of the two methods.

The second part of the paper shows that this is not necessarily the case when variance reduction is our aim. We cannot calculate the variance that would result from different blockings---the objective function of true interest is not known. In this situation, we must rely on a surrogate, some other function that is related to the objective, to construct the blocks. As we do not know, with certainty, if improvements in the surrogate function translate to improvements in the true objective, maximizing the surrogate might not be beneficial in all situations. How well threshold blocking will perform depends on how closely the surrogate corresponds to the objective.

In the most common case, when one uses covariate balance as a surrogate for unconditional variance, there are several factors that influence performance of blocking methods. First, as blockings are based on covariates, their predictiveness of the outcome will set a bound on how much the variance can be lowered---one cannot remove more uncertainty than what is allowed by the information provided by the covariates. This bound is common to all blocking methods and cannot be affected by this design choice. To lower the bound one must instead collect more pre-experimental information.

Second, the degree to which the blocking methods can use this information will affect their performance. This is governed by the choice of surrogate function. If the surrogate is of high quality (i.e., captures the relevant aspects of the outcome model), blocking are able to take advantage of the information that the covariates provide. Even if the covariates, as a group, are very informative, a badly chosen balance measure would lead to few, if any, improvements in variance. This highlights that one of the most crucial aspects of an experimental design based on blocking is the choice of balance measure. For a given surrogate, different blocking methods will not perform equally well in optimizing it. Specifically, as shown in the first part, threshold blocking will always reach the best blocking with respect to the surrogate.

Last, estimator variance is also affected by variability in the number of treated in the blocks. Ideally, there should be no variation, but if the blocks are not divisible by the number of treatment conditions, this cannot be guaranteed. We can enforce divisibility by constructing fixed-sized blocks. That will, however, restrict our ability to balance the treatment groups: there is a trade-off between this and the second factor. When covariates are quite informative of the outcome, the increased balance made possible by allowing for odd-sized blocks is likely to offset any variance increases due to variability in the number of treated within the blocks.

In principle, it is possible to correct the surrogate for the third factor by penalizing blocks that introduces variability in the number of treated. Doing so would move the surrogate closer to the estimator variance---the true objective---and thus increase its quality. The optimal penalty is, however, relative to the benefit of added covariate balance, which depends on both the outcome model and the sample size (as large samples will have a more densely populated covariate space). While it is doubtful that the optimal penalty can be derived in any specific application, a non-zero penalty is likely to be beneficial with large samples, or when we suspect covariates to be uninformative. From this perspective, fixed-sized blocking can be seen as the special case where the penalty is set to infinity.

A critical factor when choosing blocking method, that has been overlooked so far, is how one finds the optimal blocking. For nearly all samples, methods and surrogates, this is an unwieldy task; there are usually an enormous number of possible blockings. In fact, all the examples in this paper have been chosen so that the optimal blockings either are easy to derive analytically or quickly brute-forced. Currently there exists no fast algorithm that, in a general setting, can derive the optimal blocking either for a threshold or fixed-sized design---that is, an algorithm that terminates in polynomial time.

There are, however, good alternative solutions. For the special case when one seeks a fixed-sized blocking with a required block size of two, there exists an optimal algorithm that runs in polynomial time \citep{Lu2011}. For fixed-sized blockings with block sizes other than two, there exist heuristic algorithms that usually perform well \citep{Moore2012}. For threshold blocking that minimizes the maximum within-block distance, there exists an approximately optimal algorithm that runs in quasilinear time \citep{Higgins2015algorithm}. However, the non-optimality of these algorithms adds another level of complexity to the choice of method. Even in a situation where the optimal blocking of some method is likely to be superior, the same might not hold true for the blockings derived by existing algorithms.

In the end, the choice between fixed-sized and threshold blocking depends both on practical and theoretical considerations. While threshold blocking will under-perform when covariates are uninformative, in most cases where blocking is used to reduce variance, one does so just because the covariates are informative. It is, therefore, likely that threshold blocking would be preferable in many experiments where blocking is believed to be beneficial. However, unless the experiment is very small, this blocking can generally not be found. The real question is which \emph{feasible} method produces the best blocking. Simulation results from \citet{Higgins2015algorithm} indicate that the approximately optimal threshold blocking often performs well. However, when optimal fixed-sized blockings can be found (i.e., in small sample when block sizes of two are desired), it is likely that it will perform best.


\appendix

\section{Deriving the conditional variance} \label{app-conditional}

The following derivation closely follows those of \citet{Higgins2015estimators}, which in turn are based on those in \citet{Cochran1977} and \citet{Lohr1999}. The main difference being that they consider the variance conditional on potential outcomes, while I consider it conditional on covariates.

Let $\hat{\mu}_{\mathbf{b}}(1)$ and $\hat{\mu}_{\mathbf{b}}(0)$ be defined as:
\begin{eqnarray*}
	\hat{\mu}_{\mathbf{b}}(1) & \equiv & \sum_{i \in \mathbf{b}}\frac{T_iy_i}{T_\mathbf{b}} = \sum_{i \in \mathbf{b}}\frac{T_iy_i(1)}{T_\mathbf{b}},
	\\
	\hat{\mu}_{\mathbf{b}}(0) & \equiv & \sum_{i \in \mathbf{b}}\frac{(1-T_i)y_i}{n_\mathbf{b} - T_\mathbf{b}} = \sum_{i \in \mathbf{b}}\frac{(1-T_i)y_i(0)}{n_\mathbf{b} - T_\mathbf{b}},
\end{eqnarray*}
where $T_\mathbf{b}=\sum_{i \in \mathbf{b}}T_i$ and $n_\mathbf{b} - T_\mathbf{b}=\sum_{i \in \mathbf{b}}(1-T_i)$. The estimator (\ref{estimator}) can then be written as:
\begin{equation*}
	\hat{\delta} = \sum_{\mathbf{b}\in\mathbf{B}} \frac{n_\mathbf{b}}{n} \left[\hat{\mu}_{\mathbf{b}}(1) - \hat{\mu}_{\mathbf{b}}(0)\right].
\end{equation*}
When constant treatment effects are assumed we have:
\begin{eqnarray*}
	\hat{\mu}_{\mathbf{b}}(1) & = & \sum_{i \in \mathbf{b}}\frac{T_i(\delta + y_i(0))}{T_\mathbf{b}} = \delta + \hat{\mu}_{\mathbf{b}}'(1),
	\\
	\hat{\mu}_{\mathbf{b}}'(1) & \equiv & \sum_{i \in \mathbf{b}}\frac{T_iy_i(0)}{T_\mathbf{b}},
	\\
	\hat{\delta} &=& \delta + \sum_{\mathbf{b}\in\mathbf{B}} \frac{n_\mathbf{b}}{n} \left[\hat{\mu}_{\mathbf{b}}'(1) - \hat{\mu}_{\mathbf{b}}(0)\right].
\end{eqnarray*}

Treatment is assigned independently across blocks, thus:
\begin{equation*}
	\mathbf{b}_1\neq \mathbf{b}_2 \Rightarrow \Cov[\hat{\mu}_{\mathbf{b}_1}(x), \hat{\mu}_{\mathbf{b}_2}(y)]=0,
\end{equation*}
for any $x$ and $y$. The conditional estimator variance then becomes:
\begin{eqnarray*}
	\Var(\hat{\delta}|\mathbf{x},\mathbf{B}) &=& \Var\left( \delta + \sum_{\mathbf{b}\in\mathbf{B}} \frac{n_\mathbf{b}}{n} \left[\hat{\mu}_{\mathbf{b}}'(1) - \hat{\mu}_{\mathbf{b}}(0)\right]\middle|\mathbf{x},\mathbf{B}\right),
	\\
	&=& \Var\left( \sum_{\mathbf{b}\in\mathbf{B}} \frac{n_\mathbf{b}}{n} \left[\hat{\mu}_{\mathbf{b}}'(1) - \hat{\mu}_{\mathbf{b}}(0)\right]\middle|\mathbf{x},\mathbf{B}\right),
	\\
	&=& \sum_{\mathbf{b}\in\mathbf{B}} \frac{n_\mathbf{b}^2}{n^2} \Var\left[\hat{\mu}_{\mathbf{b}}'(1) - \hat{\mu}_{\mathbf{b}}(0)\middle|\mathbf{x},\mathbf{B}\right],
	\\
	&=& \sum_{\mathbf{b}\in\mathbf{B}} \frac{n_\mathbf{b}^2}{n^2} \left[\Var\left(\hat{\mu}_{\mathbf{b}}'(1)\middle|\mathbf{x},\mathbf{B}\right) + \Var\left(\hat{\mu}_{\mathbf{b}}(0)\middle|\mathbf{x},\mathbf{B}\right)\right.
	\\
	&& \qquad\qquad\left. - 2\Cov\left(\hat{\mu}_{\mathbf{b}}'(1), \hat{\mu}_{\mathbf{b}}(0)\middle|\mathbf{x},\mathbf{B}\right)\right].
\end{eqnarray*}

Under balanced block randomization, all treatment assignments are equally probable; by symmetry we, thereby, have $T_i\sim (1-T_i)$ and $\hat{\mu}_{\mathbf{b}}'(1)\sim\hat{\mu}_{\mathbf{b}}(0)$. This implies that:
\begin{equation*}
	\Var\left[\hat{\mu}_{\mathbf{b}}'(1)|\mathbf{x},\mathbf{B}\right] = \Var\left[\hat{\mu}_{\mathbf{b}}(0)|\mathbf{x},\mathbf{B}\right],
	\quad \text{and} \quad \E\left[\hat{\mu}_{\mathbf{b}}'(1)|\mathbf{x},\mathbf{B}\right] = \E\left[\hat{\mu}_{\mathbf{b}}(0)|\mathbf{x},\mathbf{B}\right],
\end{equation*}
so we have:
\begin{eqnarray*}
	\Var(\hat{\delta}|\mathbf{x},\mathbf{B}) &=& \sum_{\mathbf{b}\in\mathbf{B}} \frac{n_\mathbf{b}^2}{n^2} \left[2\Var\left(\hat{\mu}_{\mathbf{b}}'(1)\middle|\mathbf{x},\mathbf{B}\right) - 2\Cov\left(\hat{\mu}_{\mathbf{b}}'(1), \hat{\mu}_{\mathbf{b}}(0)\middle|\mathbf{x},\mathbf{B}\right)\right],
	\\
	&=& \sum_{\mathbf{b}\in\mathbf{B}} \frac{2n_\mathbf{b}^2}{n^2} \left[\E\left(\left(\hat{\mu}_{\mathbf{b}}'(1)\right)^2\middle|\mathbf{x},\mathbf{B}\right) - \E\left(\hat{\mu}_{\mathbf{b}}'(1)\middle|\mathbf{x},\mathbf{B}\right)^2 \right.
	\\
	&&\qquad\qquad\quad \left. - \E\left(\hat{\mu}_{\mathbf{b}}'(1) \hat{\mu}_{\mathbf{b}}(0)\middle|\mathbf{x},\mathbf{B}\right)\right]
	\\
	&&\qquad\qquad\quad \left. + \E\left(\hat{\mu}_{\mathbf{b}}'(1)\middle|\mathbf{x},\mathbf{B}\right)\E\left(\hat{\mu}_{\mathbf{b}}(0)\middle|\mathbf{x},\mathbf{B}\right)\right],
	\\
	&=& \sum_{\mathbf{b}\in\mathbf{B}} \frac{2n_\mathbf{b}^2}{n^2} \left[\E\left(\left(\hat{\mu}_{\mathbf{b}}'(1)\right)^2\middle|\mathbf{x},\mathbf{B}\right) - \E\left(\hat{\mu}_{\mathbf{b}}'(1)\middle|\mathbf{x},\mathbf{B}\right)^2 \right.
	\\
	&&\qquad\qquad\quad \left. - \E\left(\hat{\mu}_{\mathbf{b}}'(1) \hat{\mu}_{\mathbf{b}}(0)\middle|\mathbf{x},\mathbf{B}\right) + \E\left(\hat{\mu}_{\mathbf{b}}'(1)\middle|\mathbf{x},\mathbf{B}\right)^2\right],
	\\
	&=& \sum_{\mathbf{b}\in\mathbf{B}} \frac{2n_\mathbf{b}^2}{n^2} \left[\E\left(\left(\hat{\mu}_{\mathbf{b}}'(1)\right)^2\middle|\mathbf{x},\mathbf{B}\right) - \E\left(\hat{\mu}_{\mathbf{b}}'(1) \hat{\mu}_{\mathbf{b}}(0)\middle|\mathbf{x},\mathbf{B}\right)\right].
\end{eqnarray*}

Note that treatment assignment is independent of the outcome conditional on covariates and blocking. Furthermore, note that treatment assignment does not depend on the covariates conditional on the blocking, and that the outcome does not depend on the blocking conditional on the covariates. Together with $T_i^2=T_i$ and $T_i(1-T_i)=0$, this yields:
\begin{eqnarray*}
	\E\left(\left(\hat{\mu}_{\mathbf{b}}'(1)\right)^2\middle|\mathbf{x},\mathbf{B}\right) &=& \E\left[\left(\sum_{i \in \mathbf{b}}\frac{T_iy_i(0)}{T_\mathbf{b}}\right)\left(\sum_{i \in \mathbf{b}}\frac{T_iy_i(0)}{T_\mathbf{b}}\right)\middle|\mathbf{x},\mathbf{B}\right],
	\\
	&=& \E\left(\sum_{i \in \mathbf{b}}\sum_{j \in \mathbf{b}}\frac{T_iT_jy_i(0)y_j(0)}{T_\mathbf{b}^2}\middle|\mathbf{x},\mathbf{B}\right),
	\\
	&=& \sum_{i \in \mathbf{b}}\sum_{j \in \mathbf{b}}\E\left(\frac{T_iT_j}{T_\mathbf{b}^2}\middle|\mathbf{B}\right)\E\left(y_i(0)y_j(0)\middle|\mathbf{x}\right),
	\\
	&=& \sum_{i \in \mathbf{b}}\E\left(\frac{T_i}{T_\mathbf{b}^2}\middle|\mathbf{B}\right)\E\left(y_i(0)^2\middle|\mathbf{x}\right)
	\\
	&&\qquad\quad + \sum_{i \in \mathbf{b}}\sum_{j \in \mathbf{b}:j\neq i}\E\left(\frac{T_iT_j}{T_\mathbf{b}^2}\middle|\mathbf{B}\right)\E\left(y_i(0)y_j(0)\middle|\mathbf{x}\right),
	\\
	\E\left(\hat{\mu}_{\mathbf{b}}'(1) \hat{\mu}_{\mathbf{b}}(0)\middle|\mathbf{x},\mathbf{B}\right) &=& \E\left[\left(\sum_{i \in \mathbf{b}}\frac{T_iy_i(0)}{T_\mathbf{b}}\right)\left(\sum_{i \in \mathbf{b}}\frac{(1-T_i)y_i(0)}{n_\mathbf{b} - T_\mathbf{b}}\right)\middle|\mathbf{x},\mathbf{B}\right],
	\\
	&=& \E\left(\sum_{i \in \mathbf{b}}\sum_{j \in \mathbf{b}}\frac{T_i(1-T_j)y_i(0)y_j(0)}{T_\mathbf{b}(n_\mathbf{b} - T_\mathbf{b})}\middle|\mathbf{x},\mathbf{B}\right),
	\\
	&=& \sum_{i \in \mathbf{b}}\sum_{j \in \mathbf{b}}\E\left(\frac{T_i(1-T_j)}{T_\mathbf{b}(n_\mathbf{b} - T_\mathbf{b})}\middle|\mathbf{B}\right)\E\left(y_i(0)y_j(0)\middle|\mathbf{x}\right),
\end{eqnarray*}
\begin{eqnarray*}
	&=& \sum_{i \in \mathbf{b}}\sum_{j \in \mathbf{b}:j\neq i}\E\left(\frac{T_i(1-T_j)}{T_\mathbf{b}(n_\mathbf{b} - T_\mathbf{b})}\middle|\mathbf{B}\right)\E\left(y_i(0)y_j(0)\middle|\mathbf{x}\right).
\end{eqnarray*}
Combining the two expressions we get:
\begin{eqnarray*}
	&& \E\left(\left(\hat{\mu}_{\mathbf{b}}'(1)\right)^2\middle|\mathbf{x},\mathbf{B}\right) - \E\left(\hat{\mu}_{\mathbf{b}}'(1) \hat{\mu}_{\mathbf{b}}(0)\middle|\mathbf{x},\mathbf{B}\right) = 
	\\
	&&\qquad\qquad = \sum_{i \in \mathbf{b}}\E\left(\frac{T_i}{T_\mathbf{b}^2}\middle|\mathbf{B}\right)\E\left(y_i(0)^2\middle|\mathbf{x}\right)
	\\
	&&\qquad\qquad\qquad \, + \sum_{i \in \mathbf{b}}\sum_{j \in \mathbf{b}:j\neq i}\E\left(\frac{T_iT_j}{T_\mathbf{b}^2}\middle|\mathbf{B}\right)\E\left(y_i(0)y_j(0)\middle|\mathbf{x}\right)
	\\
	&&\qquad\qquad\qquad \, - \sum_{i \in \mathbf{b}}\sum_{j \in \mathbf{b}:j\neq i}\E\left(\frac{T_i(1-T_j)}{T_\mathbf{b}(n_\mathbf{b} - T_\mathbf{b})}\middle|\mathbf{B}\right)\E\left(y_i(0)y_j(0)\middle|\mathbf{x}\right),
	\\
	&&\qquad\qquad = \sum_{i \in \mathbf{b}}\E\left(\frac{T_i}{T_\mathbf{b}^2}\middle|\mathbf{B}\right)\E\left(y_i(0)^2\middle|\mathbf{x}\right)
	\\
	&&\qquad\qquad\qquad \, + \sum_{i \in \mathbf{b}}\sum_{j \in \mathbf{b}:j\neq i}\left\{\left[\E\left(\frac{T_iT_j}{T_\mathbf{b}^2}\middle|\mathbf{B}\right)-\E\left(\frac{T_i(1-T_j)}{T_\mathbf{b}(n_\mathbf{b} - T_\mathbf{b})}\middle|\mathbf{B}\right)\right]\right.
	\\
	&&\qquad\qquad\qquad\qquad\qquad\qquad\qquad\left.\times \E\left(y_i(0)y_j(0)\middle|\mathbf{x}\right)\right\}.
\end{eqnarray*}

Consider the expectations containing the treatment indicators. Remember that balanced block randomization, in case with two treatments, mandates that with 50\% probability $T_\mathbf{b}=\lfloor n_\mathbf{b} / 2 \rfloor$ and with 50\% probability it is equal to $\lceil n_\mathbf{b} / 2 \rceil$. By letting $o_\mathbf{b}$ be the remainder when dividing the block size with two ($o_\mathbf{b} \equiv n_\mathbf{b}\bmod 2$), we can write:
\begin{equation*}
	\lfloor n_\mathbf{b} / 2 \rfloor=(n_\mathbf{b} - o_\mathbf{b})/2,
	\quad \text{and} \quad \lceil n_\mathbf{b} / 2 \rceil=(n_\mathbf{b} + o_\mathbf{b})/2.
\end{equation*}
This yields:
\begin{eqnarray*}
	\E\left(\frac{1}{T_\mathbf{b}}\middle|\mathbf{B}\right) &=& \frac{1}{2} \left(\frac{1}{(n_\mathbf{b} - o_\mathbf{b})/2}\right) + \frac{1}{2} \left(\frac{1}{(n_\mathbf{b} + o_\mathbf{b})/2}\right),
	\\
	&=& \frac{1}{n_\mathbf{b} - o_\mathbf{b}} + \frac{1}{n_\mathbf{b} + o_\mathbf{b}},
	\\
	&=& \frac{2n_\mathbf{b}}{n_\mathbf{b}^2 - o_\mathbf{b}}.
\end{eqnarray*}
Note that for a given the number of treated in a block ($T_\mathbf{b}$), the probability for $T_i=1$ is simply the number of treated over the number of units in the block, $T_\mathbf{b}/n_\mathbf{b}$. Together with the law of iterated expectations, this yields: 
\begin{eqnarray*}
	\E\left(\frac{T_i}{T_\mathbf{b}^2}\middle|\mathbf{B}\right) &=& \E\left(\E\left(\frac{T_i}{T_\mathbf{b}^2}\middle|T_\mathbf{b},\mathbf{B}\right)\middle|\mathbf{B}\right),
	\\
	&=& \E\left(\frac{T_\mathbf{b}/n_\mathbf{b}}{T_\mathbf{b}^2}\middle|\mathbf{B}\right),
\end{eqnarray*}
\begin{eqnarray*}
	&=& \frac{1}{n_\mathbf{b}}\E\left(\frac{1}{T_\mathbf{b}}\middle|\mathbf{B}\right),
	\\
	&=& \frac{1}{n_\mathbf{b}}\frac{2n_\mathbf{b}}{n_\mathbf{b}^2 - o_\mathbf{b}} = \frac{2}{n_\mathbf{b}^2 - o_\mathbf{b}}.
\end{eqnarray*}
Similarly, the probability that two units both have $T_i=1$, conditional on the number of treated in a block, is $[T_\mathbf{b}/n_\mathbf{b}]\times [(T_\mathbf{b} - 1)/(n_\mathbf{b}-1)]$. For $i\neq j$, this implies:
\begin{eqnarray*}
	\E\left(\frac{T_iT_j}{T_\mathbf{b}^2}\middle|\mathbf{B}\right) &=& \E\left(\E\left(\frac{T_iT_j}{T_\mathbf{b}^2}\middle|T_\mathbf{b},\mathbf{B}\right)\middle|\mathbf{B}\right),
	\\
	&=& \E\left(\frac{T_\mathbf{b}(T_\mathbf{b} - 1)/(n_\mathbf{b}(n_\mathbf{b}-1))}{T_\mathbf{b}^2}\middle|\mathbf{B}\right),
	\\
	&=& \frac{1}{n_\mathbf{b}(n_\mathbf{b}-1)}\E\left(\frac{T_\mathbf{b} - 1}{T_\mathbf{b}}\middle|\mathbf{B}\right),
	\\
	&=& \frac{1}{n_\mathbf{b}(n_\mathbf{b}-1)}\left[1 - \E\left(\frac{1}{T_\mathbf{b}}\middle|\mathbf{B}\right)\right],
	\\
	&=& \frac{1}{n_\mathbf{b}(n_\mathbf{b}-1)}\left[1 - \frac{2n_\mathbf{b}}{n_\mathbf{b}^2 - o_\mathbf{b}}\right],
	\\
	&=& \frac{1}{n_\mathbf{b}(n_\mathbf{b}-1)} - \frac{2}{(n_\mathbf{b}-1)(n_\mathbf{b}^2 - o_\mathbf{b})}.
\end{eqnarray*}

Finally, the probability that one unit has $T_i=1$ while the other has $T_j=0$, again conditional on the number of treated in a block, is $[T_\mathbf{b}/n_\mathbf{b}]\times [(n_\mathbf{b} - T_\mathbf{b})/(n_\mathbf{b}-1)]$. Then, for $i\neq j$:
\begin{eqnarray*}
	\E\left(\frac{T_i(1-T_j)}{T_\mathbf{b}(n_\mathbf{b} - T_\mathbf{b})}\middle|\mathbf{B}\right) &=& \E\left(\E\left(\frac{T_i(1-T_j)}{T_\mathbf{b}(n_\mathbf{b} - T_\mathbf{b})}\middle|T_\mathbf{b},\mathbf{B}\right)\middle|\mathbf{B}\right),
	\\
	&=& \E\left(\frac{T_\mathbf{b}(n_\mathbf{b} - T_\mathbf{b})/(n_\mathbf{b}(n_\mathbf{b}-1))}{T_\mathbf{b}(n_\mathbf{b} - T_\mathbf{b})}\middle|\mathbf{B}\right),
	\\
	&=& \frac{1}{n_\mathbf{b}(n_\mathbf{b}-1)},
\end{eqnarray*}
\begin{eqnarray*}
	&& \E\left(\frac{T_iT_j}{T_\mathbf{b}^2}\middle|\mathbf{B}\right)-\E\left(\frac{T_i(1-T_j)}{T_\mathbf{b}(n_\mathbf{b} - T_\mathbf{b})}\middle|\mathbf{B}\right) =
	\\
	&&\qquad\qquad = \left(\frac{1}{n_\mathbf{b}(n_\mathbf{b}-1)} - \frac{2}{(n_\mathbf{b}-1)(n_\mathbf{b}^2 - o_\mathbf{b})}\right) - \left(\frac{1}{n_\mathbf{b}(n_\mathbf{b}-1)}\right),
	\\
	&&\qquad\qquad = - \frac{2}{(n_\mathbf{b}-1)(n_\mathbf{b}^2 - o_\mathbf{b})}.
\end{eqnarray*}
Returning to the difference in the expectations we have:
\begin{eqnarray*}
	&& \E\left(\left(\hat{\mu}_{\mathbf{b}}'(1)\right)^2\middle|\mathbf{x},\mathbf{B}\right) - \E\left(\hat{\mu}_{\mathbf{b}}'(1) \hat{\mu}_{\mathbf{b}}(0)\middle|\mathbf{x},\mathbf{B}\right) =
\end{eqnarray*}
\begin{eqnarray*}
	&&\qquad\qquad = \sum_{i \in \mathbf{b}}\frac{2}{n_\mathbf{b}^2 - o_\mathbf{b}}\E\left(y_i(0)^2\middle|\mathbf{x}\right)
	\\
	&&\qquad\quad\qquad\qquad\, + \sum_{i \in \mathbf{b}}\sum_{j \in \mathbf{b}:j\neq i}\left[- \frac{2}{(n_\mathbf{b}-1)(n_\mathbf{b}^2 - o_\mathbf{b})}\right]\E\left(y_i(0)y_j(0)\middle|\mathbf{x}\right),
	\\
	&&\qquad\qquad = \frac{1}{n_\mathbf{b}^2 - o_\mathbf{b}}\left[2\sum_{i \in \mathbf{b}}\E\left(y_i(0)^2\middle|\mathbf{x}\right) + \frac{1}{n_\mathbf{b}-1}\sum_{i \in \mathbf{b}}\sum_{j \in \mathbf{b}:j\neq i}-2\E\left(y_i(0)y_j(0)\middle|\mathbf{x}\right)\right].
\end{eqnarray*}

Note that:
\begin{equation*}
	\E\left(y_i(0)^2\middle|\mathbf{x}\right) = \Var\left(y_i(0)\middle|\mathbf{x}\right) + \E\left(y_i(0)\middle|\mathbf{x}\right)^2,
\end{equation*}
by the definition of variances. Also note that:
\begin{equation*}
	\E\left(y_i(0)y_j(0)\middle|\mathbf{x}\right)= \E\left(y_i(0)\middle|\mathbf{x}\right)\E\left(y_j(0)\middle|\mathbf{x}\right),
\end{equation*}
when $i\neq j$, since random sampling from an infinite population implies that $\Cov\left(y_i(0),y_j(0)\middle|\mathbf{x}\right)=0$. This yields:
\begin{eqnarray*}
	&& \E\left(\left(\hat{\mu}_{\mathbf{b}}'(1)\right)^2\middle|\mathbf{x},\mathbf{B}\right) - \E\left(\hat{\mu}_{\mathbf{b}}'(1) \hat{\mu}_{\mathbf{b}}(0)\middle|\mathbf{x},\mathbf{B}\right) =
	\\
	&&\qquad\qquad = \frac{1}{n_\mathbf{b}^2 - o_\mathbf{b}}\left[2\sum_{i \in \mathbf{b}}\left(\sigma_{x_i}^2 + \mu_{x_i}^2\right) + \frac{1}{n_\mathbf{b}-1}\sum_{i \in \mathbf{b}}\sum_{j \in \mathbf{b}:j\neq i}-2\mu_{x_i}\mu_{x_j}\right],
\end{eqnarray*}
where we define:
\begin{eqnarray*}
	\mu_x&=&\E\left(y_i(0)\middle|x_i=x\right)=\E\left(y_i(0)\middle|\mathbf{x}\right),
	\\
	\sigma_x^2&=&\Var\left(y_i(0)\middle|x_i=x\right)=\Var\left(y_i(0)\middle|\mathbf{x}\right).
\end{eqnarray*}

Consider the second term within the parenthesis:
\begin{eqnarray*}
	&& \frac{1}{n_\mathbf{b}-1}\sum_{i \in \mathbf{b}}\sum_{j \in \mathbf{b}:j\neq i}-2\mu_{x_i}\mu_{x_j} =
	\\
	&&\qquad\qquad = \frac{1}{n_\mathbf{b}-1}\sum_{i \in \mathbf{b}}\sum_{j \in \mathbf{b}:j\neq i}\left[-2\mu_{x_i}\mu_{x_j} + \mu_{x_i}^2 - \mu_{x_i}^2 + \mu_{x_j}^2 - \mu_{x_j}^2 \right],
	\\
	&&\qquad\qquad = \frac{1}{n_\mathbf{b}-1}\sum_{i \in \mathbf{b}}\sum_{j \in \mathbf{b}:j\neq i}\left[\mu_{x_i}^2 - 2\mu_{x_i}\mu_{x_j} + \mu_{x_j}^2  \right]
	\\
	&&\qquad\qquad\qquad\, -\; \frac{1}{n_\mathbf{b}-1}\sum_{i \in \mathbf{b}}\sum_{j \in \mathbf{b}:j\neq i}\left[ \mu_{x_i}^2 + \mu_{x_j}^2 \right],
	\\
	&&\qquad\qquad = \frac{1}{n_\mathbf{b}-1}\sum_{i \in \mathbf{b}}\sum_{j \in \mathbf{b}:j\neq i}\left(\mu_{x_i} - \mu_{x_j}\right)^2 - \frac{2}{n_\mathbf{b}-1}\sum_{i \in \mathbf{b}}\sum_{j \in \mathbf{b}:j\neq i}\mu_{x_i}^2,
	\\
	&&\qquad\qquad = \frac{1}{n_\mathbf{b}-1}\sum_{i \in \mathbf{b}}\sum_{j \in \mathbf{b}:j\neq i}\left(\mu_{x_i} - \mu_{x_j}\right)^2 - \frac{2}{n_\mathbf{b}-1}\sum_{i \in \mathbf{b}}\left(n_\mathbf{b}-1\right)\mu_{x_i}^2,
	\\
	&&\qquad\qquad = \frac{1}{n_\mathbf{b}-1}\sum_{i \in \mathbf{b}}\sum_{j \in \mathbf{b}}\left(\mu_{x_i} - \mu_{x_j}\right)^2 - 2\sum_{i \in \mathbf{b}}\mu_{x_i}^2,
\end{eqnarray*}
where the last equality exploits the fact that $\mu_{x_i} - \mu_{x_i}=0$. Substituting the term in the previous expressions, we get:
\begin{eqnarray*}
	&&\E\left(\left(\hat{\mu}_{\mathbf{b}}'(1)\right)^2\middle|\mathbf{x},\mathbf{B}\right)- \E\left(\hat{\mu}_{\mathbf{b}}'(1) \hat{\mu}_{\mathbf{b}}(0)\middle|\mathbf{x},\mathbf{B}\right) = 
	\\
	&&\quad = \frac{1}{n_\mathbf{b}^2 - o_\mathbf{b}}\left[2\sum_{i \in \mathbf{b}}\left(\sigma_{x_i}^2 + \mu_{x_i}^2\right) + \frac{1}{n_\mathbf{b}-1}\sum_{i \in \mathbf{b}}\sum_{j \in \mathbf{b}}\left(\mu_{x_i} - \mu_{x_j}\right)^2 - 2\sum_{i \in \mathbf{b}}\mu_{x_i}^2\right],
	\\
	&&\quad = \frac{1}{n_\mathbf{b}^2 - o_\mathbf{b}}\left[2\sum_{i \in \mathbf{b}}\sigma_{x_i}^2 + \frac{1}{n_\mathbf{b}-1}\sum_{i \in \mathbf{b}}\sum_{j \in \mathbf{b}}\left(\mu_{x_i} - \mu_{x_j}\right)^2\right].
\end{eqnarray*}
Which in the complete variance expression gives:
\begin{eqnarray}
	\Var(\hat{\delta}|\mathbf{x},\mathbf{B}) &=& \sum_{\mathbf{b}\in\mathbf{B}} \frac{2n_\mathbf{b}^2}{n^2} \left[\E\left(\left(\hat{\mu}_{\mathbf{b}}'(1)\right)^2\middle|\mathbf{x},\mathbf{B}\right) - \E\left(\hat{\mu}_{\mathbf{b}}'(1) \hat{\mu}_{\mathbf{b}}(0)\middle|\mathbf{x},\mathbf{B}\right)\right], \nonumber
	\\
	&=& \sum_{\mathbf{b}\in\mathbf{B}} \frac{2n_\mathbf{b}^2}{n^2} \left[\frac{1}{n_\mathbf{b}^2 - o_\mathbf{b}}\left[2\sum_{i \in \mathbf{b}}\sigma_{x_i}^2 + \frac{1}{n_\mathbf{b}-1}\sum_{i \in \mathbf{b}}\sum_{j \in \mathbf{b}}\left(\mu_{x_i} - \mu_{x_j}\right)^2\right]\right], \nonumber
	\\
	&=& \frac{4}{n}\sum_{\mathbf{b}\in\mathbf{B}} \frac{n_\mathbf{b}}{n} \left(\frac{n_\mathbf{b}^2}{n_\mathbf{b}^2 - o_\mathbf{b}}\right)\left[\sum_{i \in \mathbf{b}}\frac{\sigma_{x_i}^2}{n_\mathbf{b}} + \frac{1}{2n_\mathbf{b}(n_\mathbf{b}-1)}\sum_{i \in \mathbf{b}}\sum_{j \in \mathbf{b}}\left(\mu_{x_i} - \mu_{x_j}\right)^2\right], \nonumber
	\\
	&=& \frac{4}{n} \sum_{\mathbf{b}\in\mathbf{B}} \left\{\frac{n_\mathbf{b}}{n} \left(\frac{o_\mathbf{b}n_\mathbf{b}^2}{n_\mathbf{b}^2 - 1} + \frac{(1-o_\mathbf{b})n_\mathbf{b}^2}{n_\mathbf{b}^2}\right)\right. \times \nonumber
	\\
	&&\qquad\qquad\times\left.\left[\sum_{i \in \mathbf{b}}\frac{\sigma_{x_i}^2}{n_\mathbf{b}} + \frac{1}{2n_\mathbf{b}(n_\mathbf{b}-1)}\sum_{i \in \mathbf{b}}\sum_{j \in \mathbf{b}}\left(\mu_{x_i} - \mu_{x_j}\right)^2\right]\right\}, \nonumber
	\\
	&=& \frac{4}{n} \sum_{\mathbf{b}\in\mathbf{B}} \left\{\frac{n_\mathbf{b}}{n} \left(1 + \frac{o_\mathbf{b}}{n_\mathbf{b}^2 - 1}\right)\right. \times \nonumber
	\\
	&&\qquad\qquad\times\left.\left[\sum_{i \in \mathbf{b}}\frac{\sigma_{x_i}^2}{n_\mathbf{b}} + \frac{1}{2n_\mathbf{b}(n_\mathbf{b}-1)}\sum_{i \in \mathbf{b}}\sum_{j \in \mathbf{b}}\left(\mu_{x_i} - \mu_{x_j}\right)^2\right]\right\}. \label{npara-condvar}
\end{eqnarray}

Remember that we assumed that $\sigma_{x_i}^2=\sigma^2$. Also note that $\mu_{x_i} - \mu_{x_j}=0$ when $x_i= x_j$. Since the support of $x_i$ is $\{0,1\}$ we have $x_i=x_i^2$, yielding:
\begin{eqnarray*}
	&& \Var(\hat{\delta}|\mathbf{x},\mathbf{B}) =
	\\
	&& \quad = \frac{4}{n} \sum_{\mathbf{b}\in\mathbf{B}} \left\{\frac{n_\mathbf{b}}{n} \left(1 + \frac{o_\mathbf{b}}{n_\mathbf{b}^2 - 1}\right)\right. \times
	\\
	&& \qquad\qquad\times \left.\left[\sum_{i \in \mathbf{b}}\frac{\sigma^2}{n_\mathbf{b}} + \frac{1}{2n_\mathbf{b}(n_\mathbf{b}-1)}\sum_{i \in \mathbf{b}}\sum_{j \in \mathbf{b}}2x_i(1-x_j)\left(\mu_1 - \mu_0\right)^2\right]\right\},
\end{eqnarray*}
\begin{eqnarray*}
	&& \quad = \frac{4}{n} \sum_{\mathbf{b}\in\mathbf{B}} \frac{n_\mathbf{b}}{n} \left(1 + \frac{o_\mathbf{b}}{n_\mathbf{b}^2 - 1}\right) \left[\sigma^2 + \frac{\left(\mu_1 - \mu_0\right)^2}{n_\mathbf{b}-1}\sum_{i \in \mathbf{b}}\sum_{j \in \mathbf{b}}\frac{x_i(1-x_j)}{n_\mathbf{b}}\right],
	\\
	&& \quad = \frac{4}{n} \sum_{\mathbf{b}\in\mathbf{B}} \frac{n_\mathbf{b}}{n} \left(1 + \frac{o_\mathbf{b}}{n_\mathbf{b}^2 - 1}\right) \left[\sigma^2 + \frac{\left(\mu_1 - \mu_0\right)^2}{n_\mathbf{b}-1}\sum_{i \in \mathbf{b}} \left(x_i - \frac{1}{n_\mathbf{b}}\sum_{j \in \mathbf{b}}x_j\right)^2\right],
	\\
	&& \quad = \frac{4}{n} \sum_{\mathbf{b}\in\mathbf{B}} \frac{n_\mathbf{b}}{n} \left(1 + \frac{o_\mathbf{b}}{n_\mathbf{b}^2 - 1}\right) \left(\sigma^2 + s_{x\mathbf{b}}^2\left(\mu_1 - \mu_0\right)^2\right),
\end{eqnarray*}
where $s_{x\mathbf{b}}^2$, the sample variance in block $\mathbf{b}$, is defined as:
\begin{eqnarray*}
	s_{x\mathbf{b}}^2 &=& \frac{1}{n_\mathbf{b}-1}\sum_{i \in \mathbf{b}} \left(x_i - \frac{1}{n_\mathbf{b}}\sum_{j \in \mathbf{b}}x_j\right)^2.
\end{eqnarray*}

\section{Deriving the unconditional variance} \label{app-unconditional}

First note that when the treatment effect is constant, as here, for any unbiased experimental design, $\mathcal{D}$, the expected value of the estimator, $\E(\hat{\delta}_\mathcal{D}|\mathbf{x},\mathcal{D})$, is constant at $\delta$. With the law of total variance, for all three considered blocking methods, this implies (see Section \ref{sec-decomp}):
\begin{eqnarray*}
	n\Var(\hat{\delta}_\mathcal{D}|\mathcal{D}) &=& \E\left(n\Var(\hat{\delta}_\mathcal{D}|\mathbf{x},\mathcal{D})\right).
\end{eqnarray*}

We must still consider the distribution of the covariate and how samples map to blockings with the different methods. Consider three functions, $\mathcal{C}(\mathbf{x})$, $\mathcal{F}_2(\mathbf{x})$ and $\mathcal{T}_2(\mathbf{x})$, that provide these mappings. For example, as derived in Section \ref{ex-dist-met}, $\mathcal{F}_2\left(\{1,1,1,0,0,0\}\right)=\{\{1,1\},\{1,0\},\{0,0\}\}$. It turns out that this mapping is quite simple for all three methods in the investigated setting. In particular, when restricting our attention to sample of even sizes (so that fixed-sized blockings exist), they are all completely determined by the sum of all units' covariate values, $\Sigma_x=\sum_{i=1}^nx_i$.

As $x_i$ is a binary indicator, $\Sigma_x$ is a binomial random variable with $n$ ``trials'' each with $p=1/2$ probability of success. Remember that for a binomial random variable we have:
\begin{eqnarray*} 
	\Pr(\Sigma_x = 1) &=& np(1-p)^{n-1} = \frac{n}{2^n},
	\\
	\Pr(\Sigma_x = n - 1) &=& np^{n-1}(1-p) = \frac{n}{2^n}.
\end{eqnarray*}
By a simple recursive argument one can also show that $\Pr(\Sigma_x \bmod 2 = 0)=1/2$:
\begin{eqnarray*} 
	\Pr(\Sigma_x \bmod 2 = 0) &=& \Pr\left(x_1 = 0\right)\Pr\left(\left(\sum\nolimits_{i=2}^{n}x_i\right) \bmod 2 = 0\right)
	\\
	&&\quad +\, \Pr\left(x_1 = 1\right)\Pr\left(\left(\sum\nolimits_{i=2}^{n}x_i\right) \bmod 2 = 1\right),
\end{eqnarray*}
\begin{eqnarray*} 
	&=& \frac{1}{2}\Pr\left(\left(\sum\nolimits_{i=2}^{n}x_i\right) \bmod 2 = 0\right)
	\\
	&&\quad +\, \frac{1}{2}\left(1-\Pr\left(\left(\sum\nolimits_{i=2}^{n}x_i\right) \bmod 2 = 0\right)\right),
	\\
	&=& \frac{1}{2},
\end{eqnarray*}
where the first equality exploits that the ``trials'' are independent, and the second equality that integers must be either even or odd.

\subsection{Complete randomization}

Deriving the blocking under complete randomization ($\mathcal{C}$) is trivial as it always makes a single block of the complete sample, $\mathcal{C}(\mathbf{x})=\{\mathbf{U}\}$. As we have restricted the attention to even sample sizes, we have $o_\mathbf{U}=0$, and the results from Appendix \ref{app-conditional} yields:
\begin{eqnarray*}
	n\Var(\hat{\delta}_\mathcal{C}|\mathcal{C}) &=& \E\left(n\Var(\hat{\delta}_\mathcal{C}|\mathbf{x},\mathcal{C})\right) = \E\left(n\Var(\hat{\delta}_\mathcal{C}|\mathbf{x},\mathbf{B} = \{\mathbf{U}\})\right),
	\\
	&=& \E\left(4 \left(\sigma^2 + s_{x\mathbf{U}}^2\left(\mu_1 - \mu_0\right)^2\right)\right),
	\\
	&=& 4 \left(\sigma^2 + \E\left(s_{x\mathbf{U}}^2\right)\left(\mu_1 - \mu_0\right)^2\right).
\end{eqnarray*}
$\E\left(s_{x\mathbf{U}}^2\right)$ is the expected sample variance in the whole sample. From unbiasedness of the sample variance and the variance of a Bernoulli distribution, we have:
\begin{eqnarray*}
	\E\left(s_{x\mathbf{U}}^2\right) = \Var(x_i) = \frac{1}{4}.
\end{eqnarray*}
By substituting this in the expression for the unconditional variance we get:
\begin{eqnarray*}
	n\Var(\hat{\delta}_\mathcal{C}|\mathcal{C}) &=& 4\sigma^2 + \left(\mu_1 - \mu_0\right)^2.
\end{eqnarray*}

\subsection{Fixed-sized blocking}

When deriving fixed-sized blockings we must normally specify which surrogate function (i.e., covariate balance measure) we use. As we have a single binary covariate, this is not necessary in this case. No matter which function we use to capture the balance, the best we can do is to construct as many pairs as possible with the same covariate values. 

As $n$ is even by assumption, when $\Sigma_x$ is even ($\Sigma_x \bmod 2 = 0$), it is possible to create $n/2$ blocks that are homogeneous. It is impossible to construct any other blocking with better balance; the blocks are perfectly balanced. Thus when this is the case, $\mathcal{F}_2(\mathbf{x})$ will consist of $\Sigma_x/2$ copies of $\{1,1\}$ and $(n-\Sigma_x)/2$ copies of $\{0,0\}$. With perfectly homogeneous blocks there is no within-block covariate variation, thus $s_{x\mathbf{b}}=0$ for all $\mathbf{b}$. Furthermore, as the blocks are fixed at a size of two, we have by construction $o_\mathbf{b}=0$. The formula from Appendix \ref{app-conditional} thereby yields:
\begin{eqnarray*}
	n\Var(\hat{\delta}_{\mathcal{F}_2}|\Sigma_x \bmod 2 = 0,\mathcal{F}_2) &=& 4 \sum_{\mathbf{b}\in\mathcal{F}_2(\mathbf{x})} \frac{n_\mathbf{b}}{n} \sigma^2 = 4\sigma^2.
\end{eqnarray*}

When $\Sigma_x$ is odd ($\Sigma_x \bmod 2 = 1$), perfectly homogeneous blocks of size two can no longer be formed. One block, arbitrary labeled $\mathbf{b}'$, must contain one unit with $x_i=1$ and one with $x_i=0$. For this block we have:
\begin{eqnarray*}
	s_{x\mathbf{b}'}^2 &=& \frac{1}{n_{\mathbf{b}'}-1}\sum_{i \in \mathbf{b}'} \left(x_i - \frac{1}{n_{\mathbf{b}'}}\sum_{j \in \mathbf{b}'}x_j\right)^2 = \left(1 - 0.5\right)^2 + \left(0 - 0.5\right)^2 = 0.5.
\end{eqnarray*}
All other blocks can be constructed to be homogeneous.  $\mathcal{F}_2(\mathbf{x})$ will thus consist of one copy of $\{1,0\}$, $(\Sigma_x-1)/2$ copies of $\{1,1\}$ and $(n-\Sigma_x -1)/2$ copies of $\{0,0\}$. Conditional on an odd $\Sigma_x$, the variance becomes:
\begin{eqnarray*}
	n\Var(\hat{\delta}_{\mathcal{F}_2}|\Sigma_x \bmod 2 = 1,\mathcal{F}_2) &=& 4 \sum_{\mathbf{b}\in\mathcal{F}_2(\mathbf{x})} \frac{n_\mathbf{b}}{n} \left(\sigma^2 + s_{x\mathbf{b}}^2\left(\mu_1 - \mu_0\right)^2\right),
	\\
	&=& 4 \sum_{\mathbf{b}\in\mathcal{F}_2(\mathbf{x})} \frac{n_\mathbf{b}}{n} \sigma^2 + 4 \sum_{\mathbf{b}\in\mathcal{F}_2(\mathbf{x})} \frac{n_\mathbf{b}}{n}s_{x\mathbf{b}}^2\left(\mu_1 - \mu_0\right)^2,
	\\
	&=& 4\sigma^2 + 4\frac{2}{n} s_{x\mathbf{b}'}^2\left(\mu_1 - \mu_0\right)^2,
	\\
	&=& 4\sigma^2 + \frac{4\left(\mu_1 - \mu_0\right)^2}{n}.
\end{eqnarray*}

As $\Var(\hat{\delta}_{\mathcal{F}_2}|\mathbf{x},\mathcal{F}_2)$ is determined by $(\Sigma_x \bmod 2)$, we have:
\begin{eqnarray*}
	n\Var(\hat{\delta}_{\mathcal{F}_2}|\mathcal{F}_2) &=& \E\left(n\Var(\hat{\delta}_{\mathcal{F}_2}|\mathbf{x},\mathcal{F}_2)\right),
	\\
	&=& \E\left(n\Var(\hat{\delta}_{\mathcal{F}_2}|\Sigma_x \bmod 2,\mathcal{F}_2)\right),
	\\
	&=& \Pr(\Sigma_x \bmod 2 = 0)n\Var(\hat{\delta}_{\mathcal{F}_2}|\Sigma_x \bmod 2 = 0,\mathcal{F}_2)
	\\
	&&\, +\, \Pr(\Sigma_x \bmod 2 = 1)n\Var(\hat{\delta}_{\mathcal{F}_2}|\Sigma_x \bmod 2 = 1,\mathcal{F}_2).
\end{eqnarray*}
Remember that $\Pr(\Sigma_x \bmod 2 = 0)=\Pr(\Sigma_x \bmod 2 = 1)=1/2$ which, together with the derived conditional expectations, yields:
\begin{eqnarray*}
	n\Var(\hat{\delta}_{\mathcal{F}_2}|\mathcal{F}_2) &=& \frac{1}{2}\left(4\sigma^2\right) + \frac{1}{2}\left(4\sigma^2 + \frac{4\left(\mu_1 - \mu_0\right)^2}{n}\right),
	\\
	&=& 4\sigma^2 + \frac{2\left(\mu_1 - \mu_0\right)^2}{n}.
\end{eqnarray*}

\subsection{Threshold blocking}

As with the previous methods, optimal threshold blockings can be derived simply from $\Sigma_x$. However, unlike before, the optimal blocking is not unique with respect of the covariates.\footnote{The fixed-sized blocking is not unique with respect to units' identities but is unique with respect to covariates.} For example, if the sample consists of four units, all with covariate value of one, both $\{\{1,1,1,1\}\}$ and $\{\{1,1\},\{1,1\}\}$ are optimal threshold blockings. We will break such ties deterministically. Specifically, whenever there is a tie in covariate balance (as judged by the distance metric function in Section \ref{ex-dist-met}), the blocking with the smallest mean block size will be chosen.

In this case, as with fixed-sized blocking, when $\Sigma_x$ is even, the best threshold blocking is to construct $n/2$ perfectly homogeneous pairs:
\begin{eqnarray*}
	n\Var(\hat{\delta}_{\mathcal{T}_2}|\Sigma_x \bmod 2 = 0,\mathcal{T}_2) = 4\sigma^2.
\end{eqnarray*}

When there is an odd number of units, unlike fixed-sized blocking, threshold blocking is not forced to block two units with different covariate values together. Instead, it can make two blocks, one for each covariate value, to be of size three and thereby retain perfectly homogeneous blocks. In other words, when $\Sigma_x \bmod 2 = 1$ we have that $\mathcal{T}_2(\mathbf{x})$ consists of one copy each of $\{1,1,1\}$ and $\{0,0,0\}$, $(\Sigma_x - 3)/2$ copies of $\{1,1\}$ and $(n - \Sigma_x - 3)/2$ copies of $\{0,0\}$. Implicitly, this assumes that there are enough units to form the blocks $\{1,1,1\}$ and $\{0,0,0\}$. If there, for example, only is a single unit with $x_i=1$, it cannot be blocked with two other units that share the covariate value, as there are no other. The size constraint requires us to have at least two units in each block, and we are left with no other choice but to construct a heterogeneous block.

As the sample is of even size, when $\Sigma_x = 1$ or $n - \Sigma_x = 1$ there is one unit that is alone with its covariate value. Threshold blocking will then form blocks as pairs, of which one has mixed units (i.e., $s_{x\mathbf{b}}^2=0.5$), just as fixed-sized blocking would:
\begin{eqnarray*}
	n\Var(\hat{\delta}_{\mathcal{T}_2}|\Sigma_x \in \{1, n - 1\},\mathcal{T}_2) &=& 4\sigma^2 + \frac{4\left(\mu_1 - \mu_0\right)^2}{n}.
\end{eqnarray*}
When there is several units with both covariate values and $\Sigma_x$ is odd, perfectly homogeneous blocks can be formed by making two blocks with size three, $n_{\mathbf{b}'}=n_{\mathbf{b}''}=3$. For these two we have $o_{\mathbf{b}'}=o_{\mathbf{b}''}=1$ which yields the following conditional variance:
\begin{eqnarray*}
	&& n\Var(\hat{\delta}_{\mathcal{T}_2}|\Sigma_x \bmod 2 = 1,\Sigma_x \not\in \{1, n - 1\},\mathcal{T}_2)\; =
	\\
	&& \qquad\qquad =\; 4 \sum_{\mathbf{b}\in\mathcal{T}_2(\mathbf{x})} \frac{n_\mathbf{b}}{n} \left(1 + \frac{o_\mathbf{b}}{n_\mathbf{b}^2 - 1}\right) \sigma^2,
	\\
	&& \qquad\qquad =\; 4\sigma^2  + \frac{3o_{\mathbf{b}'}\sigma^2}{2n} + \frac{3o_{\mathbf{b}''}\sigma^2}{2n},
	\\
	&& \qquad\qquad =\; 4\sigma^2  + \frac{3\sigma^2}{n}.
\end{eqnarray*}

Similarly to the fixed-sized case, as $\Var(\hat{\delta}_{\mathcal{T}_2}|\mathbf{x},\mathcal{T}_2)$ is determined by $\Sigma_x$, we have:
\begin{eqnarray*}
	n\Var(\hat{\delta}_{\mathcal{T}_2}|\mathcal{T}_2) &=& \E\left(n\Var(\hat{\delta}_{\mathcal{T}_2}|\mathbf{x},\mathcal{T}_2)\right),
	\\
	&=& \E\left(n\Var(\hat{\delta}_{\mathcal{T}_2}|\Sigma_x,\mathcal{T}_2)\right),
\end{eqnarray*}
\begin{eqnarray*}
	&=& \Pr(\Sigma_x \bmod 2 = 0)n\Var(\hat{\delta}_{\mathcal{T}_2}|\Sigma_x \bmod 2 = 0,\mathcal{T}_2)
	\\
	&&\, +\, \Pr(\Sigma_x \in \{1, n - 1\})n\Var(\hat{\delta}_{\mathcal{T}_2}|\Sigma_x \in \{1, n - 1\},\mathcal{T}_2)
	\\
	&&\, +\, \Pr(\Sigma_x \bmod 2 = 1,\Sigma_x \not\in \{1, n - 1\})
	\\
	&&\qquad \times\, n\Var(\hat{\delta}_{\mathcal{T}_2}|\Sigma_x \bmod 2 = 1,\Sigma_x \not\in \{1, n - 1\},\mathcal{T}_2).
\end{eqnarray*}
Remember that $\Sigma_x$ is a binomial random variable, and note that $\Sigma_x \in \{1, n - 1\}$ implies that $\Sigma_x \bmod 2 = 1$:
\begin{eqnarray*}
	\Pr(\Sigma_x \in \{1, n - 1\}) &=& \frac{n}{2^n} + \frac{n}{2^n} = \frac{2n}{2^n},
	\\
	\Pr(\Sigma_x \bmod 2 = 0) &=& \Pr(\Sigma_x \bmod 2 = 1) = \frac{1}{2},
	\\
	\Pr(\Sigma_x \bmod 2 = 1,\Sigma_x \not\in \{1, n - 1\}) &=& \Pr(\Sigma_x \bmod 2 = 1) 
	\\
	&& \qquad -\; \Pr(\Sigma_x \in \{1, n - 1\}),
	\\
	&=& \frac{1}{2} - \frac{2n}{2^n},
\end{eqnarray*}
which yields:
\begin{eqnarray*}
	n\Var(\hat{\delta}_{\mathcal{T}_2}|\mathcal{T}_2) &=& \frac{1}{2}\left(4\sigma^2\right) + \left(\frac{2n}{2^n}\right)\left(4\sigma^2 + \frac{4\left(\mu_1 - \mu_0\right)^2}{n}\right)
	\\
	&&\, +\, \left(\frac{1}{2} - \frac{2n}{2^n}\right)\left(4\sigma^2  + \frac{3\sigma^2}{n}\right)
	\\
	&=& 4\sigma^2 + \frac{8\left(\mu_1 - \mu_0\right)^2}{2^n} + \frac{3\left(2^{n-1} - 2n\right)\sigma^2}{2^nn}
\end{eqnarray*}

\section{The sensitivity of conditional variances} \label{app-conditional-worse}

That the variance of the treatment effect estimator conditional on potential outcomes can be higher using a fixed-sized blocking design than with complete randomization has been discussed, independently, by several authors: it is implied by the results in \citet{Kallus2013}, it is proven in \cite{Imai2008}, it is discussed by \cite{Imbens2011} in his mimeo, and Freedman (\citeyear{Freedman2008}) provides an example in a lecture note. The core idea is quite straightforward. By conditioning on potential outcomes we can basically pick any potential outcomes independently on covariates to prove existence. For any deterministic blocking method, pick potential outcomes so to maximize the dispersion within blocks. This will introduce a negative correlation in the blocks and lead to a variance higher than with no blocking.

To my knowledge, no one has yet discussed how fixed-sized blocking performs relative no blocking when conditioning on covariates, and it is, arguably, a bit trickier to construct examples then. We must still induce a negative correlation in the potential outcomes, so that units tend to be more alike units \emph{not} in their own blocks, but we cannot choose the potential outcomes directly as we only condition on covariates. Fixed-sized blocking will improve overall covariate balance, and units with the same covariate values tend to have the same potential outcomes: at first sight, it seems impossible to induce such correlation. This, however, misses that fixed-sized blocking not necessarily improves covariate balance in all covariates---only in the function used to measure covariate balance. For example, in order to achieve balance in some covariate, blocking might lead to less balance in other covariates. If these happen to be particularly informative, they can induce a negative correlation in the potential outcomes. While this will not happen when averaging over the complete distribution of sample draws, for specific covariate draws, it could.

As an illustration, consider the following experiment. It is identical to the experiment in Section \ref{thresholdb-adv} apart from that there are now two covariates, $x_{i1}$ and $x_{i2}$, of which the first is a binary variable and the other is an integer (e.g., gender and age). Also in this case, we have two treatments, a size requirement of two, use balanced block randomization, the block difference-in-mean estimator and the average Euclidean within-block distance as surrogate. The outcome model, unbeknownst to us, is also the same as in Section \ref{thresholdb-adv}, so that only the first covariate is associated with the outcome:
\begin{eqnarray*}
	\E\left[y_i(0)\middle|x_{i1}, x_{i2}\right] &=& \E\left[y_i(0)\middle|x_{i1}\right],
	\\
	\E\left[y_i(1)\middle|x_{i1}, x_{i2}\right] &=& \E\left[y_i(1)\middle|x_{i1}\right].
\end{eqnarray*}

Again, we have a sample of six units which turn out to have the following covariate values:
\begin{center}
	\begin{tabular}{@{\vrule height11pt width0pt} ccc}
		$i$ & $x_{i1}$ & $x_{i2}$ \\ \midrule
		1   & 1        & 36        \\
		2   & 1        & 38        \\
		3   & 1        & 40        \\
		4   & 0        & 36        \\
		5   & 0        & 38        \\
		6   & 0        & 40        \\
		\bottomrule
	\end{tabular} 
\end{center}
We derive the Euclidean distances between each possible pair of units, as presented in the following distance matrix, where the rows and columns are ordered by the unit index:
\begin{eqnarray*}
	\left[\begin{array}{cccccc}
		0         & 2        & 4         & 1         & \sqrt{5} & \sqrt{17} \\ 
		2         & 0        & 2         & \sqrt{5}  & 1        & \sqrt{5}  \\ 
		4         & 2        & 0         & \sqrt{17} & \sqrt{5} & 1         \\ 
		1         & \sqrt{5} & \sqrt{17} & 0         & 2        & 4         \\ 
		\sqrt{5}  & 1        & \sqrt{5}  & 2         & 0        & 2         \\ 
		\sqrt{17} & \sqrt{5} & 1         & 4         & 2        & 0         \\ 
	\end{array}\right].
\end{eqnarray*}

There are 15 possible fixed-sized blockings in this case. Using the average within-block Euclidean distance---the objective from the example in Section \ref{ex-dist-met}---we can derive the value of the surrogate for each of these blockings, as presented in the following table:
\begin{center}
	\begin{tabular}{@{\vrule height13pt width0pt} lll}
		Blocking & \multicolumn{2}{l}{Distance} \\ \hline
		\\ [-1.5em]
		$\{\{1,2\},\{3,4\},\{5,6\}\}$ & $\left(2 + \sqrt{17} + 2\right)/6$ & $= 1.354$ \\
		$\{\{1,2\},\{3,5\},\{4,6\}\}$ & $\left(2 + \sqrt{5} + 4\right)/6$ & $= 1.373$   \\
		$\{\{1,2\},\{3,6\},\{4,5\}\}$ & $\left(2 + 1 + 2\right)/6$ & $= 0.833$   \\
		$\{\{1,3\},\{2,4\},\{5,6\}\}$ & $\left(4 + \sqrt{5} + 2\right)/6$ & $= 1.373$   \\
		$\{\{1,3\},\{2,5\},\{4,6\}\}$ & $\left(4 + 1 + 4\right)/6$ & $= 1.500$   \\
		$\{\{1,3\},\{2,6\},\{4,5\}\}$ & $\left(4 + \sqrt{5} + 2\right)/6$ & $= 1.373$   \\
		$\{\{1,4\},\{2,3\},\{5,6\}\}$ & $\left(1 + 2 + 2\right)/6$ & $= 0.833$   \\
		$\{\{1,4\},\{2,5\},\{3,6\}\}$ & $\left(1 + 1 + 1\right)/6$ & $= 0.500 \quad \boldsymbol\leftarrow$ \\
		$\{\{1,4\},\{2,6\},\{3,5\}\}$ & $\left(1 + \sqrt{5} + \sqrt{5}\right)/6$ & $= 0.912$   \\
		$\{\{1,5\},\{2,3\},\{4,6\}\}$ & $\left(\sqrt{5} + 2 + 4\right)/6$ & $= 1.373$   \\
		$\{\{1,5\},\{2,4\},\{3,6\}\}$ & $\left(\sqrt{5} + \sqrt{5} + 1\right)/6$ & $= 0.912$   \\
		$\{\{1,5\},\{2,6\},\{3,4\}\}$ & $\left(\sqrt{5} + \sqrt{5} + \sqrt{17}\right)/6$ & $= 1.433$   \\
		$\{\{1,6\},\{2,3\},\{4,5\}\}$ & $\left(\sqrt{17} + 2 + 2\right)/6$ & $= 1.354$   \\
		$\{\{1,6\},\{2,4\},\{3,5\}\}$ & $\left(\sqrt{17} + \sqrt{5} + \sqrt{5}\right)/6$ & $= 1.433$   \\
		$\{\{1,6\},\{2,5\},\{3,4\}\}$ & $\left(\sqrt{17} + 1 + \sqrt{17}\right)/6$ & $= 1.541$   \\ [0.5em] \hline
	\end{tabular}
\end{center}
The blockings are here described by the unit indices rather than their covariate values, as this is less cumbersome with multivariate covariates. The blocking that produces the lowest average distance is $\{\{1,4\},\{2,5\},\{3,6\}\}$, as indicate by the arrow in the table. According to the surrogate, there is no other way to make the covariate more balanced. Note, however, that this blocking maximizes the imbalance in the first covariate---each block contain two units with different values on the binary covariate---the only variable associated with the outcome. Due to the scale of the covariates, imbalances in the second are considered worse than those in the first. All effort is therefore put to balance the second covariate, explaining the resulting blocking.\footnote{Using a distance metric accounting for the scale of the variables, e.g., the Mahalanobis metric, would solve cases like these. Other examples could, however, then still constructed.}

As the outcome model is identical to that in Section \ref{ex-condvar}, the variance formula from that section still applies:
\begin{eqnarray*}
	\Var(\hat{\delta}_{\mathcal{F}_2}|\mathbf{x}=\mathbf{x}',\mathcal{F}_2) &=& \Var(\hat{\delta}|\mathbf{x}=\mathbf{x}',\mathbf{B}=\{\{1,0\},\{1,0\},\{1,0\}\})
	\\
	&=& \frac{2\sigma^2}{3} + \frac{\left(\mu_1 - \mu_0\right)^2}{3},
	\\ [0.5em]
	\Var(\hat{\delta}_{\mathcal{C}}|\mathbf{x}=\mathbf{x}',\mathcal{C}) &=& \Var(\hat{\delta}|\mathbf{x}=\mathbf{x}',\mathbf{B}=\{\{1,1,1,0,0,0\}\})
	\\
	&=& \frac{2\sigma^2}{3} + \frac{\left(\mu_1 - \mu_0\right)^2}{5},
\end{eqnarray*}
where $\mathbf{x}'$ is sample draw described in the table above, and $\mu_x = \E[y_i(0)|x_{i1}=x]$ and $\sigma^2 = \Var[y_i(0)|x_{i1},x_{i2}]$ are defined as above.

It follows that for any $\left(\mu_1 - \mu_0\right)^2>0$ (i.e., when covariates contain some information), we have:
\begin{eqnarray*}
	\Var(\hat{\delta}_{\mathcal{F}_2}|\mathbf{x}=\mathbf{x}',\mathcal{F}_2) - \Var(\hat{\delta}_{\mathcal{C}}|\mathbf{x}=\mathbf{x}',\mathcal{C}) &=& \frac{2\left(\mu_1 - \mu_0\right)^2}{15} > 0.
\end{eqnarray*}
Clearly, fixed-sized blocking can produce a higher variance, conditional on covariates, than no blocking at all.

\section{Decomposing the unconditional variance} \label{app-decomp}

Consider the normalized unconditional variance of an arbitrary design $\mathcal{D}$. With the law of total variance we have:
\begin{eqnarray*}
	n\Var(\hat{\delta}_\mathcal{D}|\mathcal{D}) &=& n\E\left[\Var(\hat{\delta}_\mathcal{D}|\mathbf{x},\mathcal{D})\right] + n\Var\left[\E(\hat{\delta}_\mathcal{D}|\mathbf{x},\mathcal{D})\right], 
	\\
	&=& \E\left[n\Var(\hat{\delta}_\mathcal{D}|\mathbf{x},\mathcal{D})\right] + n\Var\left[\sum_{i=1}^n \frac{\E(y_i(1)-y_i(0)|x_i)}{n}\right],
	\\
	&=& \E\left[n\Var(\hat{\delta}_\mathcal{D}|\mathbf{x},\mathcal{D})\right],
\end{eqnarray*}
where the second equality follows from unbiasedness of $\mathcal{D}$ for all sample draws and the third equality from the constant treatment effect assumption. We can substitute this for expression (\ref{npara-condvar}) derived in Appendix \ref{app-conditional}:
\begin{eqnarray*}
	n\Var(\hat{\delta}_\mathcal{D}|\mathcal{D}) &=& \E\left[n\Var(\hat{\delta}_\mathcal{D}|\mathbf{x},\mathcal{D})\right],
	\\
	&=& \E\left[4 \sum_{\mathbf{b}\in\mathcal{D}(\mathbf{x})} \left\{ \frac{n_\mathbf{b}}{n} \left(1 + \frac{o_\mathbf{b}}{n_\mathbf{b}^2 - 1}\right) \right. \right. \times
	\\
	&& \qquad\qquad \times\left.\left.\left[\sum_{i \in \mathbf{b}}\frac{\sigma_{x_i}^2}{n_\mathbf{b}} + \frac{1}{2n_\mathbf{b}(n_\mathbf{b}-1)}\sum_{i \in \mathbf{b}}\sum_{j \in \mathbf{b}}\left(\mu_{x_i} - \mu_{x_j}\right)^2\right]\right\}\right],
	\\
	&=& 4\E\left[\sum_{i\in\mathbf{U}}\frac{\sigma_{x_i}^2}{n}\right] + 4\E\left[\sum_{\mathbf{b}\in\mathcal{D}(\mathbf{x})} \frac{n_\mathbf{b}}{n} \left[\frac{1}{2n_\mathbf{b}(n_\mathbf{b}-1)}\sum_{i \in \mathbf{b}}\sum_{j \in \mathbf{b}}\left(\mu_{x_i} - \mu_{x_j}\right)^2\right]\right]
	\\
	&&\qquad\qquad +\; 2\E\left[\sum_{\mathbf{b}\in\mathcal{D}(\mathbf{x})} \left\{ \frac{n_\mathbf{b}}{n} \left(\frac{2o_\mathbf{b}}{n_\mathbf{b}^2 - 1}\right) \right.\right. \times
	\\
	&& \qquad\qquad\times\left.\left.\left[\sum_{i \in \mathbf{b}}\frac{\sigma_{x_i}^2}{n_\mathbf{b}} + \frac{1}{2n_\mathbf{b}(n_\mathbf{b}-1)}\sum_{i \in \mathbf{b}}\sum_{j \in \mathbf{b}}\left(\mu_{x_i} - \mu_{x_j}\right)^2\right]\right\} \right],
\end{eqnarray*}
where $\mathcal{D}(\mathbf{x})$ gives the blocks that design $\mathcal{D}$ constructs with sample draw $\mathbf{x}$.

Remember that $T_\mathbf{b}$ is the number of treated in block $\mathbf{b}$ and, as shown in Appendix \ref{app-conditional}:
\begin{eqnarray*}
	\E\left(\frac{1}{T_\mathbf{b}}\middle|n_\mathbf{b}\right) &=& \frac{2n_\mathbf{b}}{n_\mathbf{b}^2 - o_\mathbf{b}}.
\end{eqnarray*}
Consider:
\begin{eqnarray*}
	\E\left(\frac{1}{T_\mathbf{b}^2}\middle|n_\mathbf{b}\right) &=& \frac{1}{2}\left\lfloor \frac{n_\mathbf{b}}{2} \right\rfloor^{-2} + \frac{1}{2}\left\lceil \frac{n_\mathbf{b}}{2} \right\rceil^{-2},
	\\
	&=& \frac{2}{(n_\mathbf{b} - o_\mathbf{b})^2} + \frac{2}{(n_\mathbf{b} + o_\mathbf{b})^2},
	\\
	&=& \frac{4n_\mathbf{b}^2 + 4o_\mathbf{b}^2}{(n_\mathbf{b}^2 - o_\mathbf{b})^2},
	\\ [0.7em]
	\Std\left(\frac{1}{T_\mathbf{b}}\middle|n_\mathbf{b}\right) &=& \sqrt{\E\left(\frac{1}{T_\mathbf{b}^2}\middle|n_\mathbf{b}\right) - \E\left(\frac{1}{T_\mathbf{b}}\middle|n_\mathbf{b}\right)^2},
	\\
	&=& \sqrt{\frac{4n_\mathbf{b}^2 + 4o_\mathbf{b}^2}{(n_\mathbf{b}^2 - o_\mathbf{b})^2} - \frac{4n_\mathbf{b}^2}{(n_\mathbf{b}^2 - o_\mathbf{b})^2}},
	\\
	&=& \frac{2o_\mathbf{b}}{n_\mathbf{b}^2 - 1},
\end{eqnarray*}
where we have exploited that $o_\mathbf{b}$ is binary.

Consider the expected sample variance of the potential outcome in some block $\mathbf{b}$ conditional on the covariates:
\begin{eqnarray*}
	\E\left(s_{y\mathbf{b}}^2\middle|\mathbf{x}\right) &=& \E\left(\sum_{i\in\mathbf{b}}\frac{\left(y_i(0) - \frac{\sum_{j\in\mathbf{b}}y_j(0)}{n_\mathbf{b}}\right)^2}{n_\mathbf{b}-1}\middle|\mathbf{x}\right),
	\\
	&=& \frac{1}{n_\mathbf{b}-1}\sum_{i\in\mathbf{b}}\E\left(y_i(0)^2\middle|\mathbf{x}\right) - \frac{1}{n_\mathbf{b}(n_\mathbf{b}-1)}\sum_{i\in\mathbf{b}}\sum_{j\in\mathbf{b}}\E\left(y_i(0)y_j(0)\middle|\mathbf{x}\right),
	\\
	&=& \frac{1}{n_\mathbf{b}-1}\sum_{i\in\mathbf{b}}\E\left(y_i(0)^2\middle|\mathbf{x}\right) - \frac{1}{n_\mathbf{b}(n_\mathbf{b}-1)}\left[\sum_{i\in\mathbf{b}}\E\left(y_i(0)^2\middle|\mathbf{x}\right) + \sum_{i\in\mathbf{b}}\sum_{j\in\mathbf{b}:j\neq i}\E\left(y_i(0)y_j(0)\middle|\mathbf{x}\right)\right],
	\\
	&=& \sum_{i\in\mathbf{b}}\frac{\sigma_{x_i}^2 + \mu_{x_i}^2}{n_\mathbf{b}} - \frac{1}{n_\mathbf{b}(n_\mathbf{b}-1)}\sum_{i\in\mathbf{b}}\sum_{j\in\mathbf{b}:j\neq i}\mu_{x_i}\mu_{x_j},
	\\
	&=& \sum_{i\in\mathbf{b}}\frac{\sigma_{x_i}^2 + \mu_{x_i}^2}{n_\mathbf{b}} + \frac{1}{2n_\mathbf{b}(n_\mathbf{b}-1)}\sum_{i\in\mathbf{b}}\sum_{j\in\mathbf{b}:j\neq i}\left(- 2\mu_{x_i}\mu_{x_j} + \mu_{x_i}^2 - \mu_{x_i}^2 + \mu_{x_j}^2 - \mu_{x_j}^2\right),
	\\
	&=& \sum_{i\in\mathbf{b}}\frac{\sigma_{x_i}^2 + \mu_{x_i}^2}{n_\mathbf{b}} + \frac{1}{2n_\mathbf{b}(n_\mathbf{b}-1)}\sum_{i\in\mathbf{b}}\sum_{j\in\mathbf{b}}\left(\mu_{x_i} - \mu_{x_j}\right)^2 - \frac{1}{n_\mathbf{b}(n_\mathbf{b}-1)}\sum_{i\in\mathbf{b}}(n_\mathbf{b} - 1)\mu_{x_i}^2,
	\\
	&=& \sum_{i\in\mathbf{b}}\frac{\sigma_{x_i}^2}{n_\mathbf{b}} + \frac{1}{2n_\mathbf{b}(n_\mathbf{b}-1)}\sum_{i\in\mathbf{b}}\sum_{j\in\mathbf{b}}\left(\mu_{x_i} - \mu_{x_j}\right)^2.
\end{eqnarray*}
Further consider the sample variance of the conditional expectation of the potential outcome:
\begin{eqnarray*}
	s_{\mu\mathbf{b}}^2 &=& \frac{1}{n_\mathbf{b}-1}\sum_{i \in \mathbf{b}}\left(\mu_{x_i} - \sum_{j \in \mathbf{b}}\frac{\mu_{x_j}}{n_\mathbf{b}}\right)^2,
	\\
	&=& \frac{1}{n_\mathbf{b}-1}\sum_{i \in \mathbf{b}}\left(\mu_{x_i}^2 - \sum_{j \in \mathbf{b}}\frac{\mu_{x_i}\mu_{x_j}}{n_\mathbf{b}}\right),
	\\
	&=& \frac{1}{n_\mathbf{b}(n_\mathbf{b}-1)}\sum_{i \in \mathbf{b}}\sum_{j \in \mathbf{b}}\left(\mu_{x_i}^2 - \mu_{x_i}\mu_{x_j}\right),
	\\
	&=& \frac{1}{2n_\mathbf{b}(n_\mathbf{b}-1)}\sum_{i \in \mathbf{b}}\sum_{j \in \mathbf{b}}\left(\mu_{x_i}^2 - 2\mu_{x_i}\mu_{x_j} + \mu_{x_j}^2\right),
	\\
	&=& \frac{1}{2n_\mathbf{b}(n_\mathbf{b}-1)}\sum_{i \in \mathbf{b}}\sum_{j \in \mathbf{b}}\left(\mu_{x_i} - \mu_{x_j}\right)^2.
\end{eqnarray*}
Substituting these parts into the variance expression we get:
\begin{eqnarray*}
	n\Var(\hat{\delta}_\mathcal{D}|\mathcal{D}) &=& 4\E\left[\sum_{i\in\mathbf{U}}\frac{\sigma_{x_i}^2}{n}\right] + 4\E\left[\sum_{\mathbf{b}\in\mathcal{D}(\mathbf{x})} \frac{n_\mathbf{b}}{n} s_{\mu\mathbf{b}}^2\right] + 2\E\left[\sum_{\mathbf{b}\in\mathcal{D}(\mathbf{x})} \frac{n_\mathbf{b}}{n} \Std\left(\frac{1}{T_\mathbf{b}}\middle|n_\mathbf{b}\right) \E\left(s_{y\mathbf{b}}^2\middle|\mathbf{x}\right)\right],
	\\
	&=& 4\E\left(\sigma_{x_i}^2\right) + 4\E\left[\sum_{\mathbf{b}\in\mathcal{D}(\mathbf{x})} \frac{n_\mathbf{b}}{n} s_{\mu\mathbf{b}}^2\right] + 2\E\left[\sum_{\mathbf{b}\in\mathcal{D}(\mathbf{x})} \frac{n_\mathbf{b}}{n} \Std\left(\frac{1}{T_\mathbf{b}}\middle|n_\mathbf{b}\right) \E\left(s_{y\mathbf{b}}^2\middle|\mathbf{x}\right)\right].
\end{eqnarray*}

Now assume that the conditional expectation function is linear and consider the second term of the variance:
\begin{eqnarray*}
	\mu_{\mathbf{x}} &=& \E\left(y_i(0)\middle|\mathbf{x}_i=\mathbf{x}\right) = \alpha + \mathbf{x} \boldsymbol\beta,
	\\
	s_{\mu\mathbf{b}}^2 &=& \frac{1}{n_\mathbf{b}-1}\sum_{i \in \mathbf{b}}\left((\alpha + \mathbf{x}_i \boldsymbol\beta) - \sum_{j \in \mathbf{b}}\frac{(\alpha + \mathbf{x}_j \boldsymbol\beta)}{n_\mathbf{b}}\right)^2,
	\\
	&=& \frac{1}{n_\mathbf{b}-1}\sum_{i \in \mathbf{b}} \boldsymbol\beta^T\left(\mathbf{x}_i - \sum_{j \in \mathbf{b}}\frac{\mathbf{x}_j}{n_\mathbf{b}}\right)^T \left(\mathbf{x}_i - \sum_{j \in \mathbf{b}}\frac{\mathbf{x}_j}{n_\mathbf{b}}\right) \boldsymbol\beta,
	\\
	&=& \boldsymbol\beta^T\mathbf{Q}_\mathbf{b} \boldsymbol\beta,
	\\
	\mathbf{Q}_\mathbf{b} &=& \frac{1}{n_\mathbf{b}-1}\sum_{i \in \mathbf{b}}\left(\mathbf{x}_i - \sum_{j \in \mathbf{b}}\frac{\mathbf{x}_j}{n_\mathbf{b}}\right)^T\left(\mathbf{x}_i - \sum_{j \in \mathbf{b}}\frac{\mathbf{x}_j}{n_\mathbf{b}}\right),
	\\
	4\E\left[\sum_{\mathbf{b}\in\mathcal{D}(\mathbf{x})} \frac{n_\mathbf{b}}{n} s_{\mu\mathbf{b}}^2\right] &=& 4\boldsymbol\beta^T\E\left[\sum_{\mathbf{b}\in\mathcal{D}(\mathbf{x})} \frac{n_\mathbf{b}}{n} \mathbf{Q}_\mathbf{b}\right] \boldsymbol\beta.
\end{eqnarray*}
	
\end{document}